\documentclass[11pt]{iopart}
\usepackage{iopams}  

\usepackage{graphicx}
\usepackage{xcolor}
\usepackage{url}
\usepackage{times}
\usepackage{bbm}
\usepackage{amssymb,amsthm,amsfonts,amstext,color}
\usepackage[english]{babel}
\usepackage[colorlinks=true,breaklinks=true]{hyperref}

\usepackage{paralist} 

\definecolor{jens}{rgb}{0,0.8,0.4}
\newcommand{\je}[1]{{\color{black} #1}}
\newcommand{\jen}[1]{{\color{black} #1}}
\newcommand{\rodrigo}[1]{{\color{black} #1}}
\newcommand{\arnau}[1]{{\color{black} #1}}

\newcommand\EE{{\mathbbm{E}}}

\newcommand\id{{\mathbbm{1}}}

\newcommand{\Trace}{{\rm tr}} 

\newcommand{\bra}[1]{\langle #1|}
\newcommand{\ket}[1]{|#1\rangle}
\newcommand{\braket}[2]{\langle #1|#2\rangle}
\newcommand{\ketbra}[2]{| #1 \rangle \langle #2 |}

\newcommand{\proj}[1]{\vert #1\rangle\!\langle#1 \vert}

\newtheorem{thm}{Theorem}

\newtheorem{assum}{Assumption}
\newtheorem{cor}{Corollary}

\begin{document}

\title{
Thermal machines beyond the weak coupling regime
}  

\author{R. Gallego$^1$, A. Riera$^{1,2}$, and J. Eisert$^1$} 
\address{1 Dahlem Center for Complex Quantum Systems, Freie Universit\"at Berlin, 14195 Berlin, Germany}
\address{2 ICFO-Institut de Ciencies Fotoniques, Mediterranean Technology Park, 08860 Castelldefels, Spain}

\begin{abstract}
How much work can be extracted from a heat bath using a thermal machine? The study of this question has a very long tradition in statistical physics
in the weak-coupling limit, applied to macroscopic systems. 
However, the assumption that thermal heat baths remain uncorrelated with physical 
systems at hand is less reasonable on the nano-scale and in the quantum setting.
In this work, we establish a  framework of work extraction in the presence of quantum correlations. We show in a mathematically rigorous and quantitative
fashion that quantum correlations and entanglement emerge as a limitation to work extraction compared to what would be allowed by the second law 
of thermodynamics. At the heart of the approach are operations that capture naturally non-equilibrium dynamics encountered when putting physical  systems into contact with 
each other. We discuss various limits that relate to known results and put our work into context of 
approaches to finite-time quantum thermodynamics.
\end{abstract}


\section{Introduction}
The theory of thermodynamics originates from the study of thermal machines in the early industrial age, when 
it was of utmost importance to find out what rates of work extraction could ultimately be achieved. Early on, it became
clear that the theory of thermal machines would be intimately related with topics of fundamental 
physics such as statistical mechanics and notions of classical information theory \cite{Giles}.
Here, the interplay and relations between widely-studied notions of
 \emph{work}, \emph{entropy} and of \emph{statistical ensembles} are in the focus of attention. Concomitant with the technological development,
 the theory also became more intricate and addressed more elaborate situations.
Famous thought experiments such as Maxwell's demon, Landauer's erasure, and Slizard's engine have not only
puzzled researchers for a long time, but today also serve as a source of inspiration for quantitative studies of achievable rates
when employing thermal machines \cite{Alicki,Kosloff2,Bennett,Henrich}. Indeed, with nano-machines operating at or close to the quantum level coming into reach, there has
recently been an explosion of interest on the question what role quantum effects may possibly play. It is the potential and
limits
of work extraction with physically plausible operations which respect quantum correlations that are established in this work.

The role of correlations is already a challenging problem with a long tradition in classical thermodynamics. 
Thermal machines comprises a system that is brought into contact with a thermal bath. 
This process introduces correlations that are typically disregarded by assuming that the interaction between system and bath is sufficiently weak.
Due to the limited applicability of this assumption in practical situations, there has been a great effort in characterising thermal machines beyond the weak-coupling regime in specific models 
both in the classical and quantum setting \cite{Allahverdyan2000,Allahverdyan2001,Hilt2011,Campisi2010}. 
However, a general framework for work-extraction beyond the weak coupling assumption is still missing. 
This is mainly due to a lack of understanding of the process of evolution towards equilibrium under the effect of generic strong couplings, which has only began to be tackled in full extent in the last years.

More specifically, \arnau{let us introduce the \emph{weak-coupling assumption} \jen{precisely} as}: A system $S$ with Hamiltonian $H_S$, when put into weak thermal contact 
with a thermal bath $B$, equilibrates towards the state
\begin{equation}\label{eq:Gibbstandard}
\rho_{S}=\omega(H_S)
\end{equation}
with $\omega(H_S):=-e^{\beta H_{S}}/Z$, $Z:= \tr(-\beta H_{S})$ and $\beta>0$ being the inverse temperature. 
That is, $S$ equilibrates to the usual Gibbs ensemble. 
\arnau{Note that this notion of weak-coupling can in general differ from the one \jen{sometimes used in the study of} open quantum systems leading
to Markovian dynamics of the subsystem $S$ \cite{Davies,Kosloff}.}
The precise conditions on the coupling so that (\ref{eq:Gibbstandard}) is fulfilled have been recently tackled in the quantum setting: the strength of the coupling Hamiltonian $V$ -- measured in an adequate norm
-- has to be negligible in comparison with the intensive 
thermal energy scale $\beta^{-1}$ \arnau{\cite{InteractingThermalisation}}.
This formalises the usual derivation of the canonical ensemble from the micro-canonical one in classical statistical mechanics, where the coupling energy is neglected. 
Note that the interaction strength typically scales as the boundary of the sub-system $S$. Hence, in spatial dimensions higher than one, the weak coupling assumption cannot hold true if one increases the system size. This will be the case regardless of the strength of the coupling per particle or the relative size between $S$ or $B$ \cite{ShortFarrelly12,InteractingThermalisation}. Therefore, the weak-coupling assumption is arguably inapplicable not only to realistic situations, but also to idealised systems whose constituents interact weakly.

Recently, relaxation towards equilibrium in the strong-coupling case has been addressed from the perspective of \emph{canonical typicality}. The idea is that closed non-integrable many-body systems, however described by a unitarily evolving pure state,
 are generically
expected to equilibrate \cite{Cramer_etal08,Linden2012,ShortFarrelly12,Review,CalabreseCardy,Bartsch:2009}. Such systems behave -- for the overwhelming majority of times -- as if they were described by a thermal state
when considering expectation values of local observables \cite{ETH3,InteractingThermalisation,Review}. 
The {\it eigenstate thermalisation hypothesis} \cite{ETH1,ETH2,ETH3} gives further 
substance to this expectation. This means that when a sub-system $S$ is put in contact with a bath $B$ with Hamiltonian $H_{B}$,
the equilibrium state is not (\ref{eq:Gibbstandard}), but the reduction of the global Gibbs state of $S$ and $B$
\begin{equation}\label{eq:thermalreduced}
\rho_S=\tr_B ( {\omega}(H_{SB}) ),
\end{equation}
where $H_{SB}=H_S+H_B+V$ \cite{Review}.


In this work, we incorporate these recent insights to describe equilibration to the analysis of thermodynamics beyond the weak-coupling regime. We provide rigorous 
bounds on the optimal work extraction in the presence of thermal baths whose effect is to drive 
systems to an equilibrium state of the form (\ref{eq:thermalreduced}). Our approach considers protocols of work extraction by performing 
quantum quenches on sub-systems in strong coupling with thermal baths. 

We show that the strong coupling between system and bath may induce an unavoidably irreversible component in the process and \je{we discuss} 
to what extent this results in a limitation on the optimal work extraction. We are able to quantify this deficit in terms of standard thermodynamic functions as the free energy and we show that it 
prevents one from saturating the second law of thermodynamics. Our results are completely general in the sense that they do not make use of any specific model for the description of the system or bath.

\begin{figure}
\begin{center}
\includegraphics[scale=0.23]{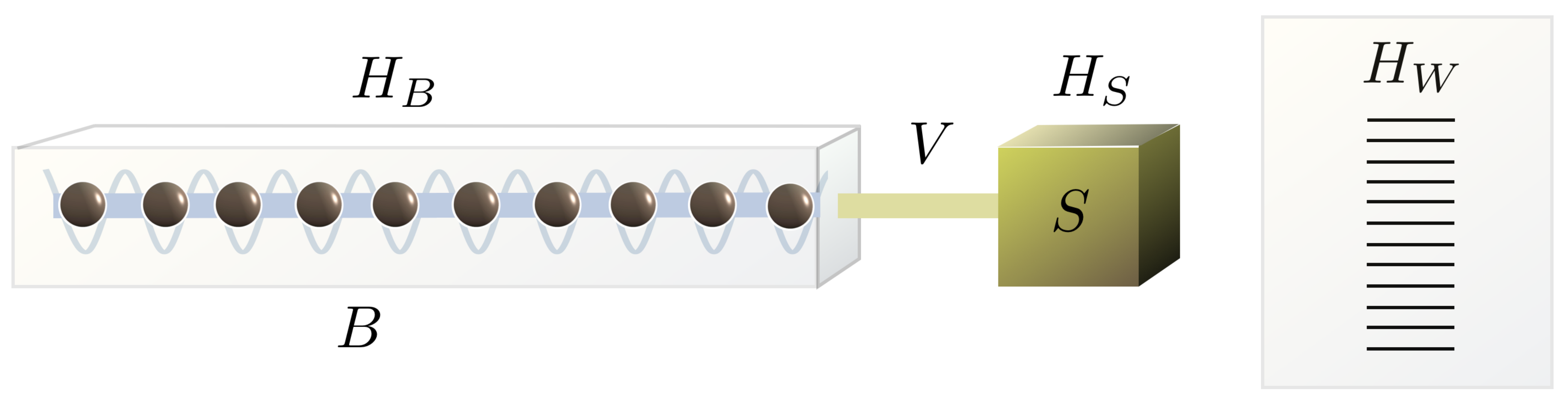}
\end{center}
\caption{\label{fig:setting}%
\textbf{Setting of the work extraction problem.} The thermal machine comprises a system $S$, a thermal bath $B$ and a battery $W$, described respectively by Hamiltonians $H_S$, $H_B$ and $H_W$. The formalism allows one to change the Hamiltonian of the system $H_S$ and to introduce an interaction $V$ between system and bath. Such Hamiltonian transformations are to be implemented by interaction with an external agent $O$ that operates the machine.
}%
\end{figure}

\section{Setting and set of operations}\label{sec:set_of_operations}
The work extraction problem requires at least the following elements:
\begin{itemize}
\item \emph{A system} $S$. This is the part of the machine upon which one has control,
i.e., it is possible to engineer its Hamiltonian $H_S$. By no means is the conservation of energy violated in this prescription.

\item \emph{A battery} $W$. {This models energy storage} and accounts for the energy supplied and extracted from the system $S$. 
It can be seen as a lifted weight. Any Hamiltonian with a suitably dense spectrum will be suitable.

\item \emph{A thermal bath} $B$. 
When the system $S$ is put into contact with the thermal bath, $S$ is assumed to thermalise in the sense of Eq.~(\ref{eq:thermalreduced}), with $H_{SB}=H_S +H_B +V$, where $V$ is the 
interaction that couples system and bath\arnau{. The interaction $V$ is assumed to be fixed and not tuneable by the operator of the machine.} No assumptions are made on the state of $SB$.
\end{itemize}
A scheme of the setting is shown in Fig.~\ref{fig:setting}.
The problem of work extraction consists of, given an initial state of $S$, an initial Hamiltonian $H^{(0)}$
and a \emph{set of operations},
transfer in expectation the maximum amount of energy from the bath to the battery.
In our case, the \emph{set of operations} are \emph{Hamiltonian transformations} and \emph{thermalisations}.

What we refer to as a {\it Hamiltonian transformation} 
a change of the Hamiltonian of the system and/or the switch on/off of the interaction $V$ between system and bath. 
Hence, at the end of each transformation, 
the Hamiltonian of  $SBW$ takes the form
\begin{equation}\label{eq:mt-generalham}
H^{(i)}=H_{S}^{(i)}+V^{(i)}+H_{B}+ H_W\, ,
\end{equation}
and $H^{(i)}$ is taken to $H^{(i+1)}$, while
%
$ V^{(i)}$ 
takes values {from} $\{0,V\}$, for $i=0,\dots, n-1$.
In order for 
this Hamiltonian transformation to be meaningful and to allow for a fair accounting of the 
work extracted, however, we require the two natural conditions to be fulfilled:
\begin{enumerate}
\item \emph{Quenches}. The reduced state on $SB$ does not change, 
\begin{equation}\label{eq:mt-quenchcondition}
\rho_{SB}^{(i)}=\rho_{SB}^{(i+1)}\, ,
\end{equation} 
{modeling the behaviour} of the system when the Hamiltonian {acting on that sub-system} is changed abruptly. 

\item \emph{Energy conservation}. The mean total energy is preserved, i.e., for each transformation
the energy change of the system (due to the change of its Hamiltonian)
has been supplied or stored by the battery
\begin{equation}\label{eq:mt-meanenergycons}
\tr ( \rho^{(i)} H^{(i)} ) = \tr ( \rho^{(i+1)} H^{(i+1)} )\, .
\end{equation}
\end{enumerate}
The first condition merely states that on the time scale of the dynamics taking place, the sudden approximation
holds true in system $SB$, or in other words, that one performs a quench. The latter is not an assumption, but rather a necessary condition for a fair account of all the energy supplied or extracted from the thermal machine $SB$. 
Note that we do not impose that the machine $SB$ is energetically isolated, which is obviously not the case since we consider time-dependent Hamiltonians that obviously do not preserve energy. Condition (\ref{eq:mt-meanenergycons}) merely states that the energy gained/lost by $SB$ is supplied or stored from the battery, which plays the role of the usual lifted weight in thermodynamics. 
The average work $\langle W \rangle_{i\to i+1}$ extracted in the quench $i\rightarrow i+1$ is
 the average energy change in the battery when the quench is performed. From Eqs.~(\ref{eq:mt-quenchcondition}) and (\ref{eq:mt-meanenergycons}) one obtains
\begin{equation}
\langle W \rangle _{i\to i+1}=\tr(\rho_{SB}^i (H_{SB}^i -H_{SB}^{i+1}))\, .
\end{equation}
This is the standard way of accounting work as the energy difference of the combined system and bath due to the 
time-dependent Hamiltonian giving rise to the evolution; in this specific case, a quench \cite{Jarzynski}.
\arnau{Note also that no assumption is made on the global state of $SBW$ and on the possible correlations between the battery $W$
and the system $S$ after implementing a quench. For the case of a unitary implementation, this issue is discussed in detail in the Appendix.}

A {\it thermalisation map} is a map that models the effect of putting the system into actual contact with the heat bath $B$ \arnau{by thermalising it}
as described in Eq.\ (\ref{eq:thermalreduced}). 
This transformation can be applied only when system and bath are interacting, that is $V^{(i)}=V$.
This family of maps is physically motivated by the realistic behaviour of evolution under generic Hamiltonians\footnote{Note that we do not require this to reflect the actual physical transformation, but the states generated 
	should for most times be locally operationally indistinguishable from those of Eq.\ (\ref{thermal}). Under reasonable assumptions, 
	this can be proven to be
	true \cite{Mueller}.}.
However,  within the abstract level of the set of operations it can be regarded simply as any completely positive map $\mathcal{T}$ acting on quantum states of $SB$ with the 
defining property 
\begin{equation}\label{thermal}
	{\tr}_{B} ({\cal T}(\rho)) = \tr_B(\omega(H_{SB})). 
\end{equation}
Applying the map $\mathcal{T}$ has no effect on the battery and hence, it does not have any work cost. 
The evolution towards equilibrium is reached by the dynamics of $SB$ alone once they are interacting, 
without having to supply or extract energy from the battery or implementing any change in the Hamiltonians of $SB$.

A sequence of such operations is called \emph{protocol}, that we denote by $\mathcal{P}$, and is specified by: (i) a list of Hamiltonians 
$\{H^{(i)}\}_{i=1}^{n-1}$ of the form (\ref{eq:mt-generalham})
and (ii) a set of instructions specifying when the thermalisation maps are realised. 
In order to avoid that the energy in the battery {originates} from a change of the system Hamiltonian, we consider protocols with the final Hamiltonian being
equal to the initial one, 
$H_S^{(n)} = H_S^{(0)}$. 
The work extracted in expectation by a protocol $\mathcal{P}$ for an initial state {$\rho_S^{(0)}$} and
an initial and final Hamiltonian $H_S^{(0)}$ is defined as the energy increase of the battery
\begin{equation}
\langle W \rangle (\mathcal{P},H^{(0)},\rho^{(0)}) :=  \tr ( (\rho_{W}^{(n)}- \rho_W^{(0)}) H_W ).
\end{equation}
Altogether, the set of operations we consider is a generalisation of the one considered in Refs.\ \cite{Abergtrullywork,Negative,EgloffRenner}, in that we
allow to change the eigenbasis of the Hamiltonian $H_{S}$. Importantly, the 
thermalisation process model is not restricted to the weak-coupling regime, {but also actually includes quantum correlations}, which alters the situation considerably.
Nonetheless, more general transformations than the ones restricted by condition (\ref{eq:mt-quenchcondition}) could be considered \cite{Anders2013}, in particular 
energy preserving unitaries that change the state of $S$, in the spirit of Refs.\ \cite{ResourceTheory,Nanomachines,Popescu2013,Popescu2013b}. 
However, Hamiltonian quenches fairly capture operational capabilities in realistic situations, rather than arbitrary unitaries,
 and are also sufficient to cover the standard weak-coupling limit 
\cite{Abergtrullywork,Nanomachines,EgloffRenner}. 
We discuss in the Supplemental Material possible ways of generalising our approach using expectation values to even further general settings and 
issues related to the
role of coherence in the battery \cite{Aberg}.

\section{Bounds on work extraction} 
Given the previous set of operations, the following theorem introduces a bound on the amount of 
work that can be extracted.

\begin{thm}[{Bounds on work extraction}]\label{mtthm:work} Given an initial state $\rho^{(0)}=\rho_{SB}^{(0)} \otimes \rho_{W}^{(0)}$ and an equal initial and final Hamiltonian $H^{(0)}$, the work that can be extracted by means of any protocol within the set of allowed operations is bounded by 
\begin{eqnarray}\label{eq:mt-boundworkmaintext}
 \langle W \rangle ({{\cal P}}, H^{(0)},\rho^{(0)}_S )  &\leq&   
 F(\tilde{\rho}_{SB},H_{SB}^{(0)} ) - F({\omega}(H_{SB}^{(0)}),H_{SB}^{(0)} ) \nonumber \\
&-& \min_{\tilde{H}_{S}} \left[ F(\tilde{\rho}_{SB},\tilde{H}_{SB} ) - F({\omega}(\tilde{H}_{SB}),\tilde{H}_{SB} ) \right]
\end{eqnarray}
where $\tilde{H}_{SB}:=\tilde{H}_S+{V}+H_B$, $\tilde{\rho}_{SB}$ is any state such that $\tr_{B}(\tilde{\rho}_{SB}^{(0)})=\tr_B (\rho_{SB}^{(0)})=  \rho_S^{(0)}$ and 
$F(\rho,H):=\tr (\rho H) -\beta^{-1}S(\rho)$
is the free energy of the state $\rho$, with respect to the Hamiltonian $H$ and inverse temperature $\beta$. Furthermore, for any initial state $\rho_{S}^{(0)}$, there exist a protocol $\mathcal{P}^{*}$ which saturates the bound.
\arnau{This optimal protocol $\mathcal{P}^{*}$ consists of a quench to the Hamiltonian $\tilde{H}_{S}$ that minimizes the difference $F(\tilde{\rho}_{SB},\tilde{H}_{SB} ) - F({\omega}(\tilde{H}_{SB}),\tilde{H}_{SB})$ followed by sequence of thermalisations and small quenches that emulates an \emph{isothermal reversible process} to come back to the initial Hamiltonian.}
\end{thm}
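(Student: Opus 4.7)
The plan is to decompose $\langle W\rangle$ into an exact free-energy sum, construct the saturating protocol explicitly for achievability, and obtain the upper bound by combining monotonicity at each thermalisation with a structural constraint on the final state. Combining the energy-conservation condition (\ref{eq:mt-meanenergycons}) with the quench condition (\ref{eq:mt-quenchcondition}), the work extracted at the $i$-th quench equals $\tr(\rho_{SB}^{(i)}(H^{(i)}-H^{(i+1)}))=F(\rho_{SB}^{(i)},H^{(i)})-F(\rho_{SB}^{(i)},H^{(i+1)})$, since the state is invariant across the quench and the entropic contribution cancels. Telescoping across the protocol and using $H_{S}^{(n)}=H_{S}^{(0)}$ yields the exact identity
\begin{equation}\label{eq:plan-decomp}
\langle W\rangle = F(\rho_{SB}^{(0)},H_{SB}^{(0)}) - F(\rho_{SB}^{(n)},H_{SB}^{(0)}) + \sum_j \bigl[F(\mathcal{T}(\rho_j),H_j)-F(\rho_j,H_j)\bigr],
\end{equation}
with $\rho_j$ and $H_j$ the state and Hamiltonian immediately before the $j$-th thermalisation.

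For achievability I construct $\mathcal{P}^{*}$ in three stages, as indicated in the statement: (i) a single quench $H_{SB}^{(0)}\to\tilde H_{SB}^{*}$ with $\tilde H_S^{*}$ realising the minimum in the theorem for the choice $\tilde\rho_{SB}=\rho_{SB}^{(0)}$; (ii) a thermalisation with $\mathcal{T}$ taken as the constant map onto $\omega(\tilde H_{SB}^{*})$, admissible by (\ref{thermal}); and (iii) an isothermal reversible process implemented by interleaving infinitesimal quenches with thermalisations so that the state tracks $\omega(H_{SB})$ throughout, terminating at $H_{SB}^{(0)}$. A direct computation yields
\[
\langle W\rangle(\mathcal{P}^{*}) = \tr\bigl(\rho_{SB}^{(0)}(H_{SB}^{(0)}-\tilde H_{SB}^{*})\bigr) + F(\omega(\tilde H_{SB}^{*}),\tilde H_{SB}^{*}) - F(\omega(H_{SB}^{(0)}),H_{SB}^{(0)}),
\]
which upon applying $\tr(\rho(A-B))=F(\rho,A)-F(\rho,B)$ coincides with the right-hand side of (\ref{eq:mt-boundworkmaintext}).

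For the upper bound, the key observations are: (a) each thermalisation contribution in (\ref{eq:plan-decomp}) rewrites as $\beta^{-1}[S(\mathcal{T}(\rho_j)\|\omega(H_j))-S(\rho_j\|\omega(H_j))]$, which is non-positive by the data-processing inequality whenever $\mathcal{T}$ preserves $\omega(H_j)$ as a fixed point, a property compatible with (\ref{thermal}); and (b) the defining property of $\mathcal{T}$ together with (\ref{eq:mt-quenchcondition}) force $\tr_B\rho_{SB}^{(n)}=\tr_B\omega(\tilde H_{SB})$ for the Hamiltonian $\tilde H_{SB}=\tilde H_S+V+H_B$ of the last thermalisation, whereupon a Lagrange-multiplier minimisation of $F(\sigma,H_{SB}^{(0)})$ over states $\sigma$ with this fixed reduction identifies the minimiser as $\sigma=\omega(\tilde H_{SB})$, so that $F(\rho_{SB}^{(n)},H_{SB}^{(0)})\geq F(\omega(\tilde H_{SB}),H_{SB}^{(0)})$. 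Substituting (a)--(b) into (\ref{eq:plan-decomp}), using $F(\omega(\tilde H_{SB}),H_{SB}^{(0)}) = F(\omega(\tilde H_{SB}),\tilde H_{SB}) + \tr(\omega(\tilde H_{SB})(H_{SB}^{(0)}-\tilde H_{SB}))$, and minimising over $\tilde H_S$ reproduces the stated bound. The main obstacle I anticipate is that the naive term-by-term combination of (a) and (b) only saturates when the protocol is precisely the canonical $\mathcal{P}^{*}$; ruling out that a protocol with many non-canonical choices of $\mathcal{T}$ can beat it requires a careful telescoping of the relative entropies $S(\rho_j\|\omega(H_j))$ along the whole sequence, which is exactly where the subtraction term $\min_{\tilde H_S}[F(\tilde\rho_{SB},\tilde H_{SB})-F(\omega(\tilde H_{SB}),\tilde H_{SB})]$ encoding the strong-coupling dissipation must emerge.
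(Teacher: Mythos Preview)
Your achievability argument is essentially the paper's (the quench to the minimiser followed by an isothermal sequence), and the identity $\tr(\rho(A-B))=F(\rho,A)-F(\rho,B)$ is used the same way. The upper-bound argument, however, diverges from the paper's and contains a real gap.

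The problem is step (a). Eq.~(\ref{thermal}) only fixes the $S$-marginal of $\mathcal{T}(\rho)$; it does \emph{not} force $\mathcal{T}(\omega(H_j))=\omega(H_j)$, so data processing for $S(\cdot\|\omega(H_j))$ is unavailable. Worse, the free-energy drop you need can fail outright: the map $\mathcal{T}(\rho)=\tr_B\omega(H_{SB})\otimes\sigma_B$ with a high-energy $\sigma_B$ satisfies (\ref{thermal}) yet has $F(\mathcal{T}(\rho),H_{SB})>F(\rho,H_{SB})$ in general. Your decomposition (\ref{eq:plan-decomp}) therefore tracks quantities (the full $SB$ states and their free energies) that are neither controlled by the hypotheses nor relevant to the work, and the anticipated ``careful telescoping'' cannot repair this because the individual terms are simply not bounded.

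What you are missing is the locality observation that drives the paper's proof: every quench changes only $H_S$, so $H_{SB}^{(i)}-H_{SB}^{(i+1)}=(H_S^{(i)}-H_S^{(i+1)})\otimes\mathbb{I}_B$ and hence $\langle W\rangle_{i\to i+1}$ depends only on $\rho_S^{(i)}$. After any thermalisation $\rho_S^{(i)}=\tr_B\omega(H_{SB}^{(i)})$, so one may \emph{replace} $\rho_{SB}^{(i)}$ by $\omega(H_{SB}^{(i)})$ inside every work term for $i\geq 1$ (and $\rho_{SB}^{(0)}$ by any $\tilde\rho_{SB}$ with the correct $S$-marginal for $i=0$). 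This yields an \emph{exact} expression for $\langle W\rangle$ entirely in terms of Gibbs states and $\tilde\rho_{SB}$, after which positivity of relative entropy and $F(\omega(H),H)\le F(\rho,H)$ give the bound directly---no assumption on $\mathcal{T}$ beyond (\ref{thermal}), and no tracking of the actual $SB$ trajectory. (Equivalently: since $\langle W\rangle$ is independent of the choice of $\mathcal{T}$, you could declare $\mathcal{T}$ to be the constant map to $\omega(H_j)$ and then your (a) and (b) become equalities; but you would need to state and justify this reduction, and at that point you have essentially reproduced the paper's argument.)
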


\begin{proof}
Note that any protocol can be expressed as a concatenation of Hamiltonian transformations and thermalisations,
and that the energy of the battery only changes in the Hamiltonian transformations.
In the first quench, the energy stored or supplied by the battery reads
\begin{equation}
\langle W \rangle_{0\to1}= \tr ( \rho_{S}^{(0)} ( H_{S}^{(0)} - H_{S}^{(1)}))=
\tr ( \tilde{\rho}_{SB}^{(0)}( H_{SB}^{(0)} - H_{SB}^{(1)})).
\end{equation}
where $\tilde{\rho}_{SB}^{(0)}$ is any state such that $\tr(\tilde{\rho}_{SB}^{(0)}) = \rho_{S}^{(0)}$.
The rest of quenches are performed after a thermalisation, hence, the work extracted from them can be written as
\begin{eqnarray}
\langle W \rangle_{i\to i+1}&=& \tr ( \rho_{SB}^{(i)} ( H_{SB}^{(i)} - H_{SB}^{(i+1)}))= \tr ( \rho_{SB}^{(i)} ( H_{S}^{(i)} - H_{S}^{(i+1)})\otimes \mathbb{I}_B ) \nonumber \\
&=&\tr (  \omega (H_{SB}^{(i)}) ( H_{S}^{(i)} - H_{S}^{(i+1)})\otimes \mathbb{I}_B ).
\end{eqnarray}
The total work extracted by a protocol is the sum of the work extracted in every Hamiltonian transformation, that is,
\begin{eqnarray}
\langle W \rangle ({\cal P}, H^{(0)},\rho^{(0)}_S ) &=& \tr \left( \tilde{\rho}_{SB}^{(0)} ( H_{SB}^{(0)} - H_{SB}^{(1)})\right)\nonumber\\
\label{eq:mt-workgeneral}&+& \sum_{i=1}^{n-1}  \tr \left(  \omega (H_{SB}^{(i)}) ( H_{SB}^{(i)} - H_{SB}^{(i+1)}) \right),
\end{eqnarray}
where $n$ is the {number of steps} of the protocol.
By using the identity $F(\omega(H),H)=\tr(H\rho)+\beta^{-1}\tr(\ln(\omega(H)\rho))$ that note that is valid for any $\rho$,
the extracted work is rewritten as
\begin{eqnarray}
\langle W \rangle ({\cal P}, H^{(0)},\rho^{(0)}_S ) &=& \tr ( \tilde{\rho}_{SB}^{(0)} ( H_{SB}^{(0)} - H_{SB}^{(1)})) \\
&+& \sum_{j=1}^{n-1} \left(F\left( \omega (H_{SB}^{(i)}) , H_{SB}^{(i)}\right)
- F\left( \omega (H_{SB}^{(j+1)}) , H_{SB}^{(j+1)}\right)\right), \nonumber \\
&-&\beta^{-1} \sum_{j=1}^{n-1} \tr\left( \ln ( \omega (H_{SB}^{(i)})) \omega (H_{SB}^{(i)}) -
\ln ( \omega (H_{SB}^{(i+1)})) \omega (H_{SB}^{(i)}) \right). \nonumber
\end{eqnarray}
After identifying the relative entropy $S(\rho \| \sigma)=\tr (\rho(\log \rho - \log \sigma))$ in the last sum of the previous equation, 
the total work becomes
\begin{eqnarray}
 \langle W \rangle ({\cal P}, H^{(0)},\rho^{(0)}_S ) &=& F(\tilde{\rho}_{SB}^{(0)},H_{SB}^{(0)} ) - F({\omega}(H_{SB}^{(0)}),H_{SB}^{(0)} ) \nonumber \\
&-&  \left( F(\tilde{\rho}_{SB}^{(0)},H^{(1)}_{SB} ) - F({\omega}(H_{SB}^{(1)}),H_{SB}^{(1)} ) \right)\nonumber \\
&-&\frac{\ln 2}{\beta} \sum_{i=1}^{n-1} S( \omega (H_{SB}^{(i+1)})\|\omega (H_{SB}^{(i)}))  . \label{eq:mt-total-work-exact}
\end{eqnarray}
Finally, the positivity of relative entropy and the inequality $F({\omega}(H),H)\leq F(\rho,H)$ complete the proof. 
The existence of a protocol $\mathcal{P}^*$ saturating the bound it is shown in \ref{sec:appendixsatwork}.
\end{proof}

Using Eq.~(\ref{eq:mt-total-work-exact}), we can identify what protocol maximises the work extracted and arbitrarily well saturates the bound.
We need to {minimize} its two negative terms, that is, (i) the second difference of free energies, and (ii) the sum of relative entropies.
The minimum of (i) is attained by choosing the first quench to the appropriate Hamiltonian $\tilde H_S$.
The term (ii) can be made arbitrarily small by performing quenches that represent a minimal change of the Hamiltonian between individual applications of thermalisation maps, 
at the expense of performing many of them.  This sequence of quenches and thermalisations precisely emulates an \emph{isothermal reversible process}.
Thus, Theorem \ref{mtthm:work} not only introduces a fundamental bound for the maximum extracted work but also
tells us what protocol arbitrarily well attains that maximum. \jen{These attainable bounds complement the findings presented in Ref.\ \cite{Allahverdyan2000}, in which the 
impact of correlations to the attainable work extraction has been considered for harmonic potentials as well as for weakly anharmonic potentials within a Fokker-Planck approach.}

Note that (\ref{eq:mt-boundworkmaintext}) contains as a particular case the well-known bounds on expected work extraction in the weak-coupling regime \cite{Popescu2013,Popescu2013b,Abergtrullywork} (see Supplemental Material). When $V$ is weak in comparison with the energy gaps of $H_{B}+H_{S}$ {to an extent that in an idealised treatment is it negligible} 
and the thermalisation process is such $\rho_{S}={\omega}(H_{S})$, then the maximum work extracted is given by the difference of free energies
\begin{equation}\label{eq:mt-workboundweakmaintext}
\langle W^{\text{wc}} \rangle ({\cal P}, H^{(0)},\rho^{(0)}_S ) \leq F(\rho_{S}^{(0)},H_{S}^{(0)})-F({\omega}(H_{S}^{(0)}),H_{S}^{(0)}).
\end{equation}
Furthermore, expression (\ref{eq:mt-boundworkmaintext}) has an insightful physical interpretation. We will show that the second line in (\eref{eq:mt-boundworkmaintext}) 
vanishes if and only if the optimal protocol is reversible. 
Otherwise, the strong coupling between system and bath induces an unavoidable dissipation in the thermalisation process that makes the protocol irreversible and limits the work 
that can be extracted.

\section{Reversibility and second law} 
We call a protocol $\mathcal{P}$ of work extraction reversible if 
${\langle W\rangle(\mathcal{P}, H}_{SB}^{(0)},\rho_{S}^{(0)})=-\langle W\rangle 
({\mathcal{P}^{-1},}
H_{SB}^{(0)},\rho_{S}^{(n)})$ where $\mathcal{P}^{-1}$ just inverts the order of the list 
and Hamiltonians $\{H^{(i)}\}_{i=1}^{n-1}$ and thermalisations of $\mathcal{P}$ and $\rho_{S}^{(n)}$ is the final state after applying $\mathcal{P}$ to $\rho_S^{(0)}$. 
\rodrigo{In other words, if $\mathcal{P}$ is a protocol that brings the system to equilibrium while extracting work, $\mathcal{P}^{-1}$ supplies work in order to bring an equilibrium state out of equilibrium. 
It is well-known that in the weak-coupling case, when the processes are optimal, $\mathcal{P^*}$ and $\mathcal{P^{*}}^{-1}$ extract/supply the same amount of work. Here we show that this is not the case in the stron-coupling case.}
One can show, by a similar argument used in the proof of Thm. \ref{mtthm:work}, that
\begin{eqnarray}\label{eq:mt-maxworkreverted}
 \langle &W& \rangle  (\mathcal{P^*}^{-1},H_{S}^{(0)},\omega_{S}(H_{SB}^{(0)}) ) \\\nonumber
&=& -F(\tilde{\rho}_{SB},H_{SB}^{(0)} ) + F({\omega}(H_{SB}^{(0)}),H_{SB}^{(0)} ) = :  \Delta F_{\text{rev}}.
\end{eqnarray}
Note that the optimal forward and reversed protocol differ exactly in the second line in (\ref{eq:mt-boundworkmaintext}), which for this reason we refer to as irreversible free-energy difference
\begin{eqnarray}\label{eq:mt-maxworkirrev}
\nonumber \Delta F_{\text{irrev}}:= -  \min_{\tilde{H}_{S}} \left[ F(\tilde{\rho}_{SB},\tilde{H}_{SB} ) - F({\omega}(\tilde{H}_{SB}),\tilde{H}_{SB} ) \right]\, .
\end{eqnarray}

Hence, even a close to being optimal protocol is surprisingly in general far from being reversible. 
The reason for the irreversibility is that when $\rho_{S}^{(0)}$ cannot be expressed as the reduced state of thermal state ${\omega}(\tilde{H}_{SB})$, then it is impossible that a 
protocol $\mathcal{P}^{-1}$ brings $\omega_{S}(H_{SB}^{(0)})$ into $\rho_{S}^{(0)}$. 
This is precisely the case when $\Delta F_{\text{irrev}}$
is not zero. Note that in the weak coupling regime this is never the case, as \emph{any} state $\rho_{S}^{(0)}$ can be expressed as a thermal state at any temperature, given that one can choose the Hamiltonian. Therefore, {in contrast} to our case, in the weak coupling case the optimal protocol is reversible.

The existence of a reversible protocol saturating the work extraction it is well-known to be related to the saturation of the second law of thermodynamics. Let us recall Clausius' theorem, that 
{in a commonly expressed variant} states that
\begin{equation}\label{eq:mt-clausiusgenmaintext}
\frac{\Delta Q}{T}=\int_i^f \frac{\delta Q}{T} \leq \Delta S =S_{f}-S_{i}\, ,
\end{equation}
where $Q$ is the heat defined as the energy lost by the bath and $S$ is the thermodynamic entropy. Most importantly, equality (saturation of second law)  holds only when the process is reversible. 
If one relates the thermodynamic with the von Neumann entropy, (\ref{eq:mt-clausiusgenmaintext}) it can be 
easily shown to imply the bounds of work extraction in the weak-coupling regime (\ref{eq:mt-workboundweakmaintext}), where indeed, the bound is saturated for reversible processes.

In the strong coupling regime, one can see that the Clausius' version of the second law (\ref{eq:mt-clausiusgenmaintext}) implies that 
\begin{equation}
\langle W \rangle ({\cal P}, H^{(0)},\rho^{(0)}_S ) \leq - \Delta F_{\text{rev}},
\end{equation}
differing from the bound of Theorem \ref{mtthm:work} precisely in $\Delta F_{\text{irrev}}$. This clarifies the role of the strong coupling in thermodynamics: The 
{\it entanglement} between system-bath induces unavoidable irreversibility that is an obstacle against 
saturating the second law of thermodynamics. It is only for particular initial states (the ones that look as reduced states of thermal states of a larger system) that reversible protocols can be implemented and the second-law can be saturated. This striking limiting effect of entanglement contrasts previous works in alternative scenarios \cite{Brunnerentanglement,Acinentanglement}, 
where entanglement was regarded {rather as an} enhancer of work extraction or power.

\section{Physical implementation in a unitary formulation}
The bounds on work extraction of our formalism coincide, in the special case of a
weak-coupling regime (\ref{eq:mt-workboundweakmaintext}), 
with previous results that employ a different set of operations based on unitary transformations \cite{Nanomachines,Popescu2013,Popescu2013b,Aberg}. There, optimal protocols employ system-bath interactions mediated by fine-tuned unitaries that differ substantially from what one would expect nature to implement generically. On the contrary, in our formalism the system-bath coupling is only required to thermalise the system following (\ref{eq:thermalreduced}), being arguably the case for most interactions. This explains the ubiquity of work extraction machines which are far from needing microscopically engineered unitaries. 
Here, in order to connect our work with this other approach, we formulate an embedding of our set of operations
 into a unitary formalism. 

\subsection{Quenches in a unitary formulation}

The standard way of describing a quench is by the sudden change of a parameter of the Hamiltonian.
By solving the time dependent Schr\"odinger equation it can be proven that if the change of such parameter is performed fast enough, 
the state of the system remains the same immediately before and after the quench.
Nevertheless, as we show next, this description of the quench has strong implications
on the properties of the system's environment.

Let us consider a two-level system $R$ with energy eigenstates $\ket{0}$ and $\ket{1}$ and 
energy levels $E_0$ and $E_1$ respectively. 
In order to perform a quench, it is also necessary to consider an \emph{environment} or \emph{battery} with Hamiltonian $H_W$
 that supplies
(stores) the lack (excess) of energy required by the quench.
The Hamiltonian of the whole setup is then $H=H_R + H_W$.
Let us now consider a \emph{unitary} process $U$ that 
performs a level transformation of the excited state $\ket{1}$ from $E_1$ to $E_1 + \Delta$.
More explicitely,
\begin{eqnarray}
\ket{0}_R\ket{0}_W \ \ &\mapsto& \ \ U\ket{0}_R\ket{0}_W=\ket{0}_R\ket{0}_W \nonumber \\
\ket{1}_R\ket{0}_W \ \ &\mapsto& \ \ U\ket{1}_R\ket{0}_W=\ket{1}_R\ket{0-\Delta}_W \, . \nonumber
\end{eqnarray}
Because of the linearity of the unitary that implements the quench, 
we can also transform an initial superposition state. For instance, the $\ket{+}_R$ becomes
\begin{equation}
\left(\ket{0}_R+\ket{1}_R\right)\ket{0}_W \ \ \mapsto \ \ \ket{0}_R\ket{0}_W + \ket{1}_R\ket{0-\Delta}_W \nonumber \, .
\end{equation}
Hence, while the initial state of the set $RW$ was a product state, the state after the quench
is entangled in a superposition for the battery of having and not having supplied energy.
This implies that \emph{it is impossible to
do quenches that leave the system unchanged if
the battery is initially in an energy eigenstate}. 
Although this conclusion seems a bit odd, it can be circumvent by having a battery with
non-distinguishable states. 

In order to clarify the above situation, 
let us think of a partition of the experiment into a system $R$, a battery $W$, and a control qubit $Q$. The total Hamiltonian of the system is
\begin{equation}\label{eq:quenchhamiltoniansimple}
H=H_R^{(0)} \otimes\mathbb{I}_W \otimes \ketbra{0}{0}_Q+H_R^{(1)} \otimes \mathbb{I}_W \otimes \ketbra{1}{1}_Q+\mathbb{I}_S \otimes H_W \otimes \mathbb{I}_Q
\end{equation}
where $H_R^{(0,1)}$ are arbitrary Hamiltonians with eigenvectors 
$\lbrace \ket{i^{(0,1)}}\rbrace_{i}$ and eigenvalues
$\{E_i^{(0,1)}\}_i$ with $i=1,\dots, d_R$, and $d_R$ is the dimension of the Hilbert space ${\cal H}_R$ of system $R$.
Note that the control qubit $Q$ dictates which is the Hamiltonian of the system.

We consider now the action of a global unitary $U$ supported on $RWQ$
on an initially uncorrelated state $\rho^{(0)}_{RWQ}=\rho^{(0)}_R \otimes \rho^{(0)}_W \otimes \ketbra{0}{0}_Q$, in a way
such that the final state can be written as 
\begin{equation}
	\rho^{(1)}_{RWQ}=U\rho^{(1)}_{RWQ}U^{\dagger}=\rho^{(1)}_{RW}\otimes \ketbra{1}{1}_Q. 
\end{equation}
In this way, according to (\ref{eq:quenchhamiltoniansimple}), the effective Hamiltonian acting on $R$ has changed from $H_R^{(0)}$ to $H_R^{(1)}$. In addition, we impose the following three natural constraints on the unitary transformation:
\begin{enumerate}
\item \emph{Energy conservation}. The unitary commutes with the Hamiltonian $[H,U]=0$.
\item \emph{Battery translational invariance}. We assume the battery to have a sufficiently dense 
equally spaced spectrum $\mathcal{W}$ (or a continuous one) with the property that 
the unitary $U$ commutes with $\id_R\otimes \Gamma_W(e)$ for all $w,w+e\in \mathcal{W}$, 
where $\Gamma_W(e)\ket{w} = \ket{w+e}$ is the tranlation operator on the battery. 
This merely reflects the invariance of the
transformation under changes of energy-origin of $H_W$ \cite{Popescu2013b}.

\item \emph{Quenches}. The unitary transformation is such that there exists an initial state of the battery $\rho_W^{(0)}$ is such that 
$\rho_R^{(1)}= \rho_R^{(0)}$ for every initial state $\rho_R^{(0)}$ and any Hamiltonian transformation 
$H_R^{(0)}\mapsto H_R^{(1)}$.
\end{enumerate}

Conditions (i) and (ii) are not present in the abstract formalism of work extraction of Section \ref{sec:set_of_operations}. We highlight that this is a desired feature of our approach: 
The general formalism that provides the above mentioned
bounds avoids as many assumptions as possible for the sake of general applicability. 
However, the particular protocol that attains the maximum fulfills {further}
conditions of physical relevance. In {particular}, 
assumption (i) allows one to extend this analysis to a single-shot work extraction, as considered in {Refs.} \cite{Abergtrullywork,EgloffRenner,Nanomachines}. We leave these {analyses} 
open for further work.
{The following} theorem shows that under {these additional} assumptions, the unitary performing the quench exists and is unique. 

\begin{thm}[{Uniqueness of unitary realisations}]\label{thm:unitary}
Consider unitary transformations such that $U\rho^{(0)} U^{\dagger}=\rho_{RW}^{(1)} \otimes \ketbra{1}{1}$ for any initial state $\rho^{(0)}=\rho^{(0)}_{RW} \otimes \ketbra{0}{0}$. The unitary that fulfils conditions (i-iii) is unique and can be written as 
\begin{eqnarray}\label{eq:unitaryquench}
\nonumber U&=&\sum_{i,j}\ket{j^{(1)}}\braket{j^{(1)}}{i^{(0)}}\bra{i^{(0)}}_R \otimes \Gamma_W (E_i^{(0)}-E_j^{(1)}) \otimes \ketbra{1}{0}_Q\\
&+&\sum_{i,j}\ket{i^{(0)}}\braket{i^{(0)}}{j^{(1)}}\bra{j^{(1)}}_R \otimes \Gamma_W (E_j^{(1)}-E_i^{(0)}) \otimes \ketbra{0}{1}_Q.
\end{eqnarray}
\end{thm}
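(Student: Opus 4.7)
The plan is to impose the three conditions (i)--(iii) successively on the most general unitary whose action maps the control qubit from $|0\rangle_Q$ to $|1\rangle_Q$, gradually fixing the structure of $U$ until only the claimed form (\ref{eq:unitaryquench}) remains. First, the requirement that every initial state $\rho_{RW}^{(0)} \otimes |0\rangle\langle 0|_Q$ evolves to one of the form $\rho_{RW}^{(1)} \otimes |1\rangle\langle 1|_Q$, combined with unitarity of $U$ on the full space, forces the block decomposition
\[
U = A \otimes |1\rangle\langle 0|_Q + B \otimes |0\rangle\langle 1|_Q,
\]
where $A, B$ are unitaries on $\mathcal{H}_{RW}$ (the identity $U^\dagger U = \id$ yields $A^\dagger A = B^\dagger B = \id_{RW}$ directly).

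Next, I would use energy conservation and battery translational invariance to constrain $A$ (and analogously $B$). The condition $[H,U]=0$ with $H$ in the form (\ref{eq:quenchhamiltoniansimple}) reduces to the intertwining relation $(H_R^{(1)} + H_W)\,A = A\,(H_R^{(0)} + H_W)$, which in the product of energy eigenbases forces the matrix element of $A$ between $|i^{(0)},w\rangle$ and $|j^{(1)},w'\rangle$ to vanish unless $w' = w + E_i^{(0)} - E_j^{(1)}$. The battery-translation condition $[A, \id_R \otimes \Gamma_W(e)] = 0$ then kills the $w$-dependence of these matrix elements, yielding
\[
A = \sum_{i,j} A_{ij}\, |j^{(1)}\rangle\langle i^{(0)}|_R \otimes \Gamma_W(E_i^{(0)} - E_j^{(1)}),
\]
with unitarity of $A$ forcing the $d_R \times d_R$ matrix $(A_{ij})$ to be itself unitary. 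The same chain of arguments produces the analogous expression for $B$.

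Finally, I would apply the quench condition to fix $(A_{ij})$. Tracing $W$ and $Q$ out of $U(\rho_R^{(0)} \otimes \rho_W^{(0)} \otimes |0\rangle\langle 0|_Q)U^\dagger$ yields a reduced state whose off-diagonal structure depends on battery-dependent factors $c_{ii'jj'} := \mathrm{tr}[\rho_W^{(0)}\, \Gamma_W((E_i^{(0)}-E_{i'}^{(0)})-(E_j^{(1)}-E_{j'}^{(1)}))]$. For a battery state sufficiently broad in the energy basis these factors tend to one uniformly on the relevant energy scales, so the reduced state collapses to $\tilde A \rho_R^{(0)} \tilde A^\dagger$ with $\tilde A := \sum_{i,j} A_{ij}\, |j^{(1)}\rangle\langle i^{(0)}|$. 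Demanding $\tilde A \rho_R^{(0)} \tilde A^\dagger = \rho_R^{(0)}$ for every $\rho_R^{(0)}$ forces $\tilde A = \id_R$, i.e.\ $A_{ij} = \langle j^{(1)}|i^{(0)}\rangle$, which together with the analogous identification for $B$ reproduces exactly (\ref{eq:unitaryquench}). The main obstacle is making this last step rigorous for a truly normalisable battery state: the factors $c_{ii'jj'}$ are strictly smaller than one whenever their argument is non-zero, so one must either argue through a limit of increasingly spread battery states or exploit the freedom in the Hamiltonian transformation to scan the energy gaps independently and thereby disentangle $(A_{ij})$ from the battery-dependent coefficients.
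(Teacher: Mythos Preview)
Your outline matches the paper's proof almost step for step: the $Q$-flip forces the block decomposition, energy conservation constrains $A|i^{(0)},w\rangle$ to the span of $\{|j^{(1)},w+E_i^{(0)}-E_j^{(1)}\rangle\}_j$ with coefficients $R_{j,i,w}$, translational invariance kills the $w$-dependence, and the quench condition is then used to fix $R_{j,i}$. The only substantive divergence is exactly where you locate your ``main obstacle''.

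The paper does not pass to a limit of spread-out battery states. Instead, writing $K_{j,j'}^{i,i'}:=\langle\Psi^{(0)}|\Gamma(\Delta_{i,j}-\Delta_{i',j'})|\Psi^{(0)}\rangle$ for the (unknown) battery state guaranteed by (iii), the quench condition on pure $|\phi\rangle_R$ gives the exact identity
\[
R_{j,i}\,R^*_{j',i'}\,K_{j,j'}^{i,i'}=\langle j^{(1)}|i^{(0)}\rangle\langle i'^{(0)}|j'^{(1)}\rangle\quad\forall\,i,i',j,j'.
\]
Taking moduli squared and summing over $j,j'$, unitarity of $(R_{j,i})$ gives $\sum_{j,j'}|R_{j,i}|^2|R_{j',i'}|^2=1$, while the right side also sums to $1$; since $|K_{j,j'}^{i,i'}|\le 1$, this forces $|K_{j,j'}^{i,i'}|=1$ for every index combination. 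Only then is the ``any Hamiltonian transformation'' clause of (iii) invoked (your second suggested fix): as the energy gaps $\Delta_{i,j}-\Delta_{i',j'}$ can be varied freely, one gets $K\equiv 1$ exactly, and the displayed identity yields $R_{j,i}=\langle j^{(1)}|i^{(0)}\rangle$ directly. So the missing ingredient in your sketch is this summation trick, which converts the soft estimate $K\approx 1$ into a hard equality and removes the need for any limiting argument.
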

\begin{proof}
The fact that the unitary flips the state of $Q$, implies that
\begin{equation}
U=U^{\text{on}}_{RW} \otimes \ketbra{1}{0}_Q+U^{\text{off}}_{RW}
 \otimes \ketbra{0}{1}_Q.
\end{equation}
Let us first consider the case in which the initial state $\rho^{(0)}_{RWQ}=\proj{i^{(0)}} \otimes \proj{w} \otimes \proj{0}$ is an eigenstate of $H$ with energy $E_i^{(0)}+w$.
By using condition (i), the final state after performing the unitary
is also an eigenstate of $H$ with the same energy, 
\begin{equation}
(H_R^{(1)}+H_W)U^{\text{on}}_{RW}\ket{i^{(0)}} \otimes \ket{w}= (E_i^{(0)}+w)U^{\text{on}}_{RW}\ket{i^{(0)}} \otimes \ket{w}.
\end{equation}
Hence, the state $U^{\text{on}}_{RW} \ket{i^{(0)}} \otimes \ket{w}$ is contained
in the subspace spanned by $\{\ket{j^{(1)}}\otimes \ket{E_i^{(0)}+w-E_j^{(1)}}\}_j$, that is
\begin{equation}
U^{\text{on}}_{RW}\ket{i^{(0)}} \otimes \ket{w}=\sum_{j}R_{j,i,w}\ket{j^{(1)}}\otimes \ket{\Delta_{i,j} +w},
\end{equation}
where $\Delta_{i,j} := E_i^{(0)}-E_j^{(1)}$ and $R_{j,i,w}$ are its coefficients.

By imposing condition (ii), $[U^{\text{on}}_{RW},\mathbb{I}_R \otimes \Gamma_W(E)]=0$, one gets
\begin{equation}
\sum_{j}(R_{j,i,w}-R_{j,i,w+E})\ket{j^{(1)}}\otimes \ket{\Delta_{i,j} +w+E}=0,
\end{equation}
which implies that the $R_{j,i,w}$ are independent of $w$, hence, $R_{j,i,w}=R_{j,i}$.

In order to exploit condition (iii), let us consider that the state of the battery that allows for quenches, 
i.~e~$\rho_R^{(1)}=\rho_R^{(0)}$ for any $\rho_R^{(0)}$, 
is pure and denoted by $\ket{\Psi^{(0)}}_W=\sum_{w}B_w \ket{w}$. 
We choose the global initial state to be $\rho_{RWQ}^{(0)}=\ketbra{\phi}{\phi}_R \otimes \ketbra{\Psi^{(0)}}{\Psi^{(0)}}_W \otimes \ketbra{0}{0}_Q$, with $\ket{\phi}=\sum_i c_i \ket{i^{(0)}}$. Then,
\begin{equation}\label{eq:finalpurestate}
	U^{\text{on}}_{RW} \ket{\phi}\otimes \ket{\Psi^{(0)}}
	=\sum_{j,i}c_iR_{j,i} \ket{j^{(1)}}\otimes \Gamma(\Delta_{i,j} )\ket{\Psi^{(0)}}.
\end{equation}
From Eq.\ (\ref{eq:finalpurestate}), we can compute the reduced state of $R$ 
\begin{equation}\label{eq:rho1final}
	\rho_R^{(1)} = \sum_{i,i',j,j'}c_ic_{i'}^{*}R_{j,i} R_{j',i'}^*K_{j,j'}^{i,i'} \ketbra{j^{(1)}}{j'^{(1)}},
\end{equation}
where $K_{j,j'}^{i,i'}{:= }\bra{\Psi^{(0)}}\Gamma(\Delta_{i,j} -\Delta_{i',j'} )\ket{\Psi^{(0)}}$. 
Imposing $\rho_R^{(1)}=\proj{\phi}_R$, we obtain
\begin{equation}
	R_{j,i}R^*_{j',i'}K_{j,j'}^{i,i'}=\braket{j}{i}\braket{i'}{j'} , \hspace{.8cm} \forall \ j,\ j',\ i,\ i' \, .
\end{equation}
Multiplying the previous equation by its conjugate and summing over $j$ and $j'$, one gets
\begin{equation}
\sum_{j,j'}|R_{j,i}|^2|R_{j',i'}|^2 |K_{j,j'}^{i,i'}|^2 = 1 = \sum_{j,j'}|R_{j,i}|^2|R_{j',i'}|^2,\, \forall  i,i'\, .
\end{equation}
Notice that because of the  $\Gamma(x)$ {being}
 unitary, $|K_{j,j'}^{i,i'}|\le 1$.
Hence, the only one way such that condition (iii) can hold true requires that 
\begin{equation}\label{eq:K-condition}
	|K_{j,j'}^{i,i'}|=|\bra{\Psi^{(0)}}\Gamma(\Delta_{i,j} -\Delta_{i',j'} )\ket{\Psi^{(0)}}|=1,\hspace{.5cm} \forall \, j,j',i,i'\, .
\end{equation}
This can be only satisfied for every choice of 
$E^{(1)}_{j},E^{(0)}_{i},$ (i.e., for every choice of energy levels of the initial and final Hamiltonian) if $B_w=B_{w+\Delta E}$ for every possible value of 
\begin{equation}
	\Delta E\leq \max_{j,j'}|\Delta_{i,j} -\Delta_{i',j'}|,
\end{equation}	
which in turn implies that $K_{j,j'}^{i,i'}=1$ {$\forall  j,j',i,i'$}. This, together with (\ref{eq:rho1final}), 
implies that $R_{j,i}=\braket{j^{(1)}}{i^{(0)}}$. This leads to 
\begin{equation}
U^{\text{on}}_{RW}=\sum_{i,j}\ket{j^{(1)}}\braket{j^{(1)}}{i^{(0)}}\bra{i^{(0)}}_R \otimes \Gamma_W (E_i^{(0)}-E_j^{(1)}) .
\end{equation}
This argument can be straightforwardly extended for the case of a mixed state of the battery, 
\begin{equation}
	\rho_W^{(0)}=\sum_{w}p_w\ketbra{\Psi_w^{(0)}} {\Psi_w^{(0)}}. 
\end{equation}
Also, a symmetric argument can be applied to $U^{\text{off}}_{RW}$ by considering an inverse quench $H_R^{(1)}\mapsto H_R^{(0)}$ that must leave invariant the initial state of $R$. {Altogether}, we arrive at 
Eq.\ (\ref{eq:unitaryquench}).
\end{proof}

One observation of the previous proof is that in order for the unitary to keep the state of the system invariant, 
the battery must be in an energy coherent state with a much larger uncertainty than
the operator norm of the Hamiltonians $H_R$. This is encapsulated in Eq.~(\ref{eq:K-condition}).
Hence, coherence is a resource needed to implement quenches. Contrary to the conclusions 
of Ref.\ \cite{Aberg}, coherence is destroyed due to time-evolutions of the battery with $H_W$ (see Supplemental Material). 
This suggests that the catalytic role of coherence in Ref.\ \cite{Aberg} may be a consequence of disregarding time-evolution as the mechanism for thermalisation.

\subsection{Thermalisation after the quench driven by the unitary time evolution}
Although closed quantum systems evolve according to a unitary time evolution and hence, strictly speaking, never
equilibrate, their subsystems generically do relax towards the time average state \cite{Linden2012}.
This equilibration is not exact but probabilistic, in the sense that the subsystem is very close to its time average for the overwhelming majority, but not all, times.
Furthermore, when additional assumptions are made on the bath, such as it is described by a local Hamiltonian and its state has decaying correlations, the time average state of a subsystem is the reduced of the global thermal \cite{InteractingThermalisation,Mueller}.

The previous ideas are the motivation to introduce the thermalisation map in Eq.~(\ref{eq:thermalreduced}). 
Nevertheless, in our case, there is a subtlety that has to be taken into account: the time evolution of the battery.
Note first that as during the thermalisation process 
the battery is not interacting with $SB$, the dynamics in $SB$ are independent of $W$ and
$S$ relaxes to the reduced of a thermal as it has been previously explained.  
The relevance of considering the dynamics of the battery concerns to what happens to the battery itself $W$ and in particular
to its coherence, which can lead to limitation for further quenches.

This issue is studied in detail in the Appendix.
In sum, the coherence of the battery is lost due to time evolution under its own Hamiltonian $H_W$, 
what represents an obstacle against performing further quenches in general. 
In our case, this is not a problem, since coherence is lost after a thermalisation-decoherence process 
that leaves the system-bath setting in a diagonal state in its eigenbasis and 
this allows for the implementation of further quenches.
In alternative scenarios, where quenches of systems with non-diagonal matrix elements want to be performed, 
it is a relevant question how coherence could be re-established in the battery by a certain operation -- possibly employing a device playing the role of a source of coherence. As a matter of fact, the role of {coherence} and how it should be accounted for as a resource in thermodynamics is an interesting open question that we leave open for {future} 
work. Note that the role played by coherence {in the present work} is quite different from the one {taken} in {Ref.} \cite{Aberg}. There, coherence is a catalytic resource, in the sense that it is not 
{consumed} in the protocol an can be re-used an arbitrary number of times. Our analysis points out that such catalytic role of the coherence may be only an {artefact} 
of the {specific} framework of operations considered {there}, where time-evolutions are {not taken into account}.

\section{Conclusions}
In this work we have introduced a framework to study work extraction in thermal  machines. Our formalism considers quantum Hamiltonian 
quenches as the fundamental operations and analyses the effect of strong couplings between the system and the thermal bath. 
Strikingly, system-bath entanglement seriously limits the amount of work extractable and induces irreversibility in the process, 
which in turn prevents one from saturating the second law of thermodynamics. This is relevant since
any finite-time approach to quantum thermodynamics necessarily has to take correlations and non-zero interactions into account.
Also, we introduce a formalism to embed Hamiltonian quenches into a unitary formalism. Under a set of reasonable assumptions, we show that the unitary embedding is unique and coherence is required as a resource to implement the quenches. {It should be clear that the mindset presented here can also be applied to a variety of related problems in quantum thermodynamics such as Landauer's principle
\cite{QuantitativeLandauer,MunichLandauer}, whenever
correlations are expected to be non-negligible.} Hence, this work opens new venues to understand the role of quantum effects such as entanglement and coherence in thermodynamics.

\paragraph{Acknowledgements.} {We would like to thank the EU (Q-Essence, SIQS, RAQUEL, COST, AQuS), the ERC (TAQ), the BMBF, the FQXi, and the AvH for support.}\\

\bibliographystyle{apsrev4-1}

\appendix

\section{Optimal protocol saturating the work bound}\label{sec:appendixsatwork}

Here, we show that {the bound}  (\ref{eq:mt-boundworkmaintext} )\rodrigo{can be arbitrarily well approximated. This can be most easily seen in a ``continuum limit'' of protocols, where an arbitrarily large number $n$ of operations are performed}.
The first step of the protocol \rodrigo{that arbitrarily well saturates the bound (\ref{eq:mt-boundworkmaintext})} is to perform a quench on $SBW$ from $H^{(0)}$ to $H^{(1)}=H_S^{(1)}+H_B+{V}+H_W$, where $H_{S}^{(1)}$ is the Hamiltonian that 
attains the minimum in the second term of (\ref{eq:mt-boundworkmaintext}). Applying (\ref{eq:mt-quenchcondition}) straightforwardly  one finds
\begin{equation}
\langle W \rangle ^{0,1}= \tr_{SB} ( (H_{SB}^{(0)}-H_{SB}^{(1)}) \rho_{SB}^{(0)} )=\tr_{S} ( (H_{S}^{(0)}-H_{S}^{(1)}) \rho_{S}^{(0)} ).
\end{equation}
Consider now a differentiable {parametrized curve} $H_{S}:  [0,1] \rightarrow {\cal B}(\mathcal{H}_{S})$, where ${\mathcal B}(\mathcal{H}_{S})$ denotes the 
bounded operators on the Hilbert space associated with 
$SB$. This function fulfills $H_{S}(0)=H_{S}^{(1)}$ and $H_{S}(1)=H_{S}^{(0)}$. Given an integer $n$, one defines a sequence of $n-1$ Hamiltonians as 
\begin{equation}
	H_{SB}^{(i)}:= H_{S}\left(\frac{i-1}{n-1}\right)+{V}+H_B
\end{equation}	
with $i= {1,\dots, n}$. This sequence of Hamiltonians will be used as a sequence of quenches on the equilibrated sub-system, as discussed in Section  \ref{sec:quenchesequilibrated}.
More precisely, consider a protocol in which, after the first quench from $H^{(0)}$ to $H^{(1)}$ described above, one applies a sequence state thermalisations as (\ref{thermal})  followed by quenches $H_{SB}^{(i)}\mapsto H_{SB}^{(i+1)}$ with $i= {1,\dots, n-1}$. One finds that
\begin{equation}
\langle W \rangle^{1,n}= \sum_{i=1}^{n-1} \langle W \rangle^{i,i+1} =\rodrigo{\sum_{i=1}^{n-1} }\tr \left( (H_S^{(i)}-H_S^{(i+1)})\otimes \mathbb{I}_B \: \omega (H_{SB}^{(i)}) \right).
\end{equation}
In the limit of $n$ tending to infinity, the {expected} work cost of these sequence of quenches can be written as
\begin{equation}\label{eq:workintegral}
	\lim_{n \rightarrow \infty} \langle W \rangle^{1,n}= - \int_{0}^{1} \tr \left( \frac{\partial H_{S}(\lambda)}{\partial \lambda} \otimes \mathbb{I}_B \: \omega (H_{S}(\lambda)+{V}+H_B) \right)  d \lambda.
\end{equation}
Let us denote $H_{SB}(\lambda)=H_{S}(\lambda) +{V}+H_B$, then
\begin{eqnarray}
\nonumber \frac{\partial }{\partial \lambda}&\ln  (\tr ( e^{-\beta H_{SB}(\lambda)} ) )
\\&=\frac{  \frac{\partial}{\partial \lambda}  \tr( e^{-\beta H_{SB}(\lambda)})}{ \tr( e^{-\beta H_{SB}(\lambda)})}
=\frac{    \tr( \frac{\partial}{\partial \lambda}  e^{-\beta H_{SB}(\lambda)})}{ \tr( e^{-\beta H_{SB}(\lambda)})} \nonumber\\
\label{eq:wilcoxformula}&= \frac{\tr ( \int_{0}^{1} e^{-\alpha \beta H_{SB}(\lambda)} \frac{\partial}{\partial \lambda}( - \beta H_{SB}(\lambda)) \: e^{-(1-\alpha) \beta H_{SB}(\lambda)} )\: \text{d} \alpha  }   { \tr( e^{-\beta H_{SB}(\lambda)}) }  \\
&= \frac{ \int_{0}^{1} \tr ( e^{-\alpha \beta H_{SB}(\lambda)} \frac{\partial}{\partial \lambda}( - \beta H_{SB}(\lambda)) \: e^{-(1-\alpha) \beta H_{SB}(\lambda)}  ) \: \text{d} \alpha}   { \tr( e^{-\beta H_{SB}(\lambda)}) }
\nonumber \\
&=  \frac{  \tr (   \frac{\partial }{\partial \lambda} (- \beta (H_{SB}(\lambda)) \: e^{-\beta H_{SB}(\lambda)} ) }   { \tr( e^{-\beta H_{SB}(\lambda)}) }
\\\label{eq:wilcoxfinal} &=-\beta \tr \left( \frac{\text{d}H_S(\lambda)}{\text{d}\lambda} \otimes \mathbb{I}_B \: \omega (H_{S}(\lambda)+{V}+H_B) \right) , 
\end{eqnarray}
where Eq.\ (\ref{eq:wilcoxformula}) follows from Wilcox formula for matrix exponential derivatives \cite{Wilcoxformula}. By combining Eq.\ (\ref{eq:wilcoxfinal}) with Eq.\ 
(\ref{eq:workintegral}) and $F({\omega}(H),H)=-  \ln (\tr (e^{-\beta H}))/\beta$, one finds 
\begin{equation}
\lim_{n \rightarrow \infty} \langle W \rangle^{1,n}= F ( {\omega}(H_{SB}^{(1)}),H_{SB}^{(1)}) )- F ( {\omega}(H_{SB}^{(0)}),H_{SB}^{(0)}) ),
\end{equation}
hence, the total work extracted in the process is 
\begin{eqnarray}
\rodrigo{\lim_{n \rightarrow \infty}} \langle W \rangle^{1,n} 
(H^{(0)}, \rho_{S}^{(0)})&=&\tr_{S} ( (H_{S}^{(0)}-H_{S}^{(1)}) \rho_{S}^{(0)} )+ F ( {\omega}(H_{SB}^{(1)}),H_{SB}^{(1)}) )\\&-& F ( {\omega}(H_{SB}^{(0)}),H_{SB}^{(0)}) )\nonumber\\
\nonumber &=& F(\tilde{\rho}_{SB},H_{SB}^{(0)} ) - F({\omega}(H_{SB}^{(0)}),H_{SB}^{(0)} ) \\
&-& F(\tilde{\rho}_{SB},H_{SB}^{(1)} ) - F({\omega}(H_{SB}^{(1)}),H_{SB}^{(1)} )\\\label{eq:realisationfinal}
\nonumber &=&F(\tilde{\rho}_{SB},H_{SB}^{(0)} ) - F({\omega}(H_{SB}^{(0)}),H_{SB}^{(0)} ) 
\\&-& \min_{\tilde{H}_{S}} \left[ F(\tilde{\rho}_{SB},\tilde{H}_{SB} ) - F({\omega}(\tilde{H}_{SB}),\tilde{H}_{SB} ) \right],\label{eq:realisationfinal2}
\end{eqnarray}
where Eq.\ (\ref{eq:realisationfinal}) follows from calculations equivalent to the ones from the proof of Thm. \ref{thm:optimalprotocol} and (\ref{eq:realisationfinal2}) is derived 
from the choice of $H_{SB}^{(1)}$. 

\section{Bounds on work extraction in the weak coupling limit} 
In the weak couplin limit the effect of the bath is to drive the system $S$ to an {equilibrium} state ${\omega}(H_{S})$ 
that is Gibbs, since
\begin{equation}
	{\omega}(H_{SB})\approx {\omega}(H_{S})\otimes \omega (H_{B}), 
\end{equation}
then one can take $\tilde{\rho}_{SB}={\omega}(\tilde{H}_{S})\otimes \omega (H_{B})$ and $\tilde{H}_S=-\ln (\rho_{S}^{(0)})/\beta$ and a simple calculation shows that in that case $\Delta F_{\text{irrev}}=0$ and $\Delta F_{\text{rev}}$ does not depend on $H_B$, so that
\begin{equation}\label{eq:workweakcoupling}
\max_{\mathcal{P}} \langle W \rangle ( \mathcal{P},H^{(0)},\rho^{(0)}_S )  \leq F(\rho_{S}^{(0)},H_{S}^{(0)} ) - F({\omega}(H_{S}^{(0)}),H_{S}^{(0)} ).
\end{equation}
Let us now comment on the precise role of the two terms $\Delta F_{\text{rev}}$ and  $\Delta F_{\text{irrev}}$ appearing in the bound Eq.\ 
(\ref{eq:mt-boundworkmaintext}), as defined in Eq.\ (\ref{eq:mt-maxworkreverted}) and Eq.\ (\ref{eq:mt-maxworkirrev}). 
Consider first a modified scenario in which $SB$ are treated as larger working medium that we denote by $S'$. In such scenario, one has full control over the Hamiltonian of $S'$, that is $H_{SB}$, and furthermore, that $S'$ can be driven to the Gibbs equilibrium state ${\omega}(H_{SB})$ -- this may be achieved by weak-coupling with a bath $B'$ that interacts with $SB$. In this case, similar analysis to the one leading to (\ref{eq:workweakcoupling}) shows that the maximum work extracted is precisely $\Delta F_{\text{rev}}$. Hence, $\Delta F_{\text{irrev}}$ should be understood as a work penalty due to our lack of control over $H_{B}$, and therefore, through expression (\ref{thermal}), over the {equilibrium} state of $S$.


\section{Coherence as a resource for quenches}\label{sec:coherenceforquenches}
From the proof of Theorem \ref{thm:unitary} it is clear that one needs a specific initial state of the battery {$\rho_W(0)= \ket{\Psi^{(0)}}\bra{\Psi(0)}$} 
in order to guarantee that the state of $R$ is not altered by the change of Hamiltonian. This is encapsulated in the following condition
\begin{equation}\label{eq:conditionbattery}
K_{j,j'}^{i,i'}:= \bra{\Psi^{(0)}}\Gamma(\Delta_{i,j} -\Delta_{i',j'})\ket{\Psi^{(0)}} =1\:\: \forall \: i,i',j,j'.
\end{equation}
This {condition}
can be achieved by employing an initial state {vector} of the battery $\ket{\Psi^{(0)}}=\ket{\Psi_{\sqcap}}$ with
\begin{equation}\label{eq:defstatecoherent}
\ket{\Psi_{\sqcap}}=\frac{1}{{N(E^{(0)}_W,\Delta)}} \sum_{w=E^{(0)}_W}^{E^{(0)}_W+\Delta} \ket{w},
\end{equation}
where $N(E^{(0)}_W,\Delta)$ is the number of states with energy between $E^{(0)}_W$ and $E^{(0)}_W +\Delta$, according to the discretisation chosen. Then,
\begin{equation}
K_{j,j'}^{i,i'}=\frac{\Delta-|E_{j'}^{(1)}-E_j^{(1)}+E_{i}^{(0)}-E_{i'}^{(0)}|}{\Delta}.
\end{equation}
Therefore, by assuming 
\begin{equation}\label{eq:condition3}
\max_{i,i',j',j}\frac{|E_{j'}^{(1)}-E_j^{(1)}+E_{i}^{(0)}-E_{i'}^{(0)}|}{\Delta} \leq \epsilon
\end{equation}
by taking $\Delta$ so that $\epsilon>0$ is arbitrarily small we obtain $K_{j,j'}$ arbitrarily close to one.

Let us now analyze how the state of the battery is changed after the quench from $H_R^{(0)}$ to $H_R^{(1)}$. Starting from an initial state
\begin{equation}
	\rho_{RWQ}^{(0)}=\rho_R^{(0)} \otimes  \ketbra{\Psi_{\sqcap}}{\Psi_{\sqcap}} _W\otimes \ketbra{0}{0}_Q,
\end{equation}
applying (\ref{eq:unitaryquench}) one finds that
\begin{eqnarray}\label{eq:stateafterquench}
\rho_{RW}^{(1)}=\sum_{i,i',j,j'} \braket{j^{(1)}  }{i^{(0)}  } \bra{i^{(0)} } &\rho_R^{(0)}&  \ket{i'^{(0)} }
  \braket{i'^{(0)}  }{j'^{(1)}  }  \ket{j^{(1)}}\nonumber \\
&\times&\bra{j'^{(1)}  }_R  \Gamma(\Delta_{i,j})\ketbra{\Psi_{\sqcap}}{\Psi_{\sqcap}} \Gamma^{\dagger}(\Delta_{i',j'}).
\end{eqnarray}
Let us define the expected \emph{work} extracted in the process, as the mean-energy difference between the initial and the final state of the battery. Then,
\begin{eqnarray}
	\langle W \rangle &:= &\Trace_W ( H_W (\rho_W^{(1)}-\rho_W^{(0)} ) )\nonumber \\
	&=& \Trace_W ( H_W \sum_{i,i',j} \braket{j^{(1)}  }{i^{(0)}  } \bra{i^{(0)} } \rho_R^{(0)}  \ket{i'^{(0)} } \braket{i'^{(0)}  }{j^{(1)}  }   \Gamma(\Delta_{i,j})\ketbra{\Psi_{\sqcap}}{\Psi_{\sqcap}} \Gamma^{\dagger}(\Delta_{i',j}) ) \nonumber \\
		&-&\Trace_W ( H_W \ketbra{\Psi_{\sqcap}}{\Psi_{\sqcap}} )\nonumber \\	
	\nonumber&=&\sum_{i,i',j} \braket{j^{(1)}  }{i^{(0)}  } \bra{i^{(0)} } \rho_R^{(0)}  \ket{i'^{(0)} } \braket{i'^{(0)}  }{j^{(1)}  }  \frac{1}{N(E^{(0)}_W,\Delta)} \\
           &\times& \sum_{w,w'=E^{(0)}_W}^{E^{(0)}_W+\Delta}
     \sum_{e=-\infty}^{\infty}e\braket{e}{\Delta_{i,j}+w}\braket{\Delta_{i',j}+w'}{e} 	-\Trace_W ( H_W \ketbra{\Psi_{\sqcap}}{\Psi_{\sqcap}} ).
\end{eqnarray}
From condition Eq.\ (\ref{eq:condition3}) in the limit $\epsilon \rightarrow 0$, we get
\begin{eqnarray}
	\label{eq:workbattery1}\langle W \rangle &= &\sum_{i,i',j} \braket{j^{(1)}  }{i^{(0)}  } \bra{i^{(0)} } \rho_R^{(0)}  \ket{i'^{(0)} } \braket{i'^{(0)}  }{j^{(1)}  } \frac{1}{N(E^{(0)}_W,\Delta)}  \sum_{w{=}E^{(0)}_W}^{E^{(0)}_W+\Delta} (\Delta_{i,j}+w)
	\nonumber \\
	&-&\Trace_W ( H_W \ketbra{\Psi_{\sqcap}}{\Psi_{\sqcap}} )\\
	\label{eq:workbattery2}&=&\sum_{i}\bra{i^{(0)} } \rho_R^{(0)}  \ket{i^{(0)} }E_i^{(0)} -  \sum_{i,i',j} \braket{j^{(1)}  }{i^{(0)}  } \bra{i^{(0)} } \rho_R^{(0)}  \ket{i'^{(0)} } \braket{i'^{(0)}  }{j^{(1)}  } E_j^{(1)}\\
	\label{eq:workbattery3}&=&\Trace_R ( (H_R^{(0)} -H_R^{(1)} ) \rho_R^{(0)} ),
\end{eqnarray}
where Eq.\ (\ref{eq:workbattery2}) follows from the fact that
\begin{equation}
	\Trace_W ( H_W \ketbra{\Psi_{\sqcap}}{\Psi_{\sqcap}} )=\frac{\rodrigo{1}}{N(E^{(0)}_W,\Delta)}  \sum_{w{=}E^{(0)}_W}^{E^{(0)}_W }w. 
\end{equation}
In short, Eq.\ (\ref{eq:workbattery3}) formalises the intuition that the expected energy provided (stored) by the battery is just the expected energy gained (lost) by the system $R$ upon the quench is applied. Indeed, (\ref{eq:workbattery3}) can be derived {straightforwardly} from {the}
conservation of expected energy of $RWQ$ and the fact that the state of $R$ does not change. However, we derive it explicitly for consistency check, and also as an illustrative example of how to deal with similar calculations that appear in further sections.

 \section{Quenches with classical battery}
 
The unitary (\ref{eq:unitaryquench}) is the transformation that changes the effective Hamiltonian acting on $R$, while leaving the state invariant. As shown in previous sections, a sufficiently coherent initial state of the battery is necessary to perform such transformation. Here, we study what is the effect of the unitary (\ref{eq:unitaryquench}) if the initial state of the battery is a classical state. We will show how the state of $R$ is indeed disturbed when one implements that change of Hamiltonian and how it relates with the work extracted by the battery in such process.
Let us consider an initial state 
\begin{equation}
	\rho_{RWQ}^{(0)}=\rho_R^{(0)} \otimes \ketbra{0}{0}_W \otimes \ketbra{0}{0}_Q.
\end{equation}	
We choose the battery to be initialised in the state $\ketbra{0}{0}_W$ for ease of notation, but the extension to other pure initial states, or convex mixtures of eigenstates of $H_W$ is straightforward. The final state of $RW$ after the quench is
\begin{equation}
\rho_{RW}^{(1)}=\sum_{i,i',j,j'}\braket{j^{(1)}}{i^{(0)}}\bra{i^{(0)}}\rho_R^{(0)}\ket{i'^{(0)}}\braket{i'^{(0)}}{j'^{(1)}}\: \ketbra{j^{(1)}}{j'^{(1)}}_R \otimes \ketbra{\Delta_{i,j}}{\Delta_{i',j'}}_W.
\end{equation}
The final state of the system $R$ will depend heavily on the degeneracies of both $H_R^{(0)}$ and $H_R^{(1)}$, and also on the degeneracies of the energy differences $\Delta_{i,j}$. Let us, assume that the initial state is diagonal in the eigenbasis of $H_R^{(0)}$. That is 
\begin{equation}
	\rho_R^{(0)}= \sum_i \bra{i^{(0)}}\rho_R^{(0)}\ket{i^{(0)}} \proj{i^{(0)}}.
\end{equation}	
In this case
\begin{eqnarray}
\langle W \rangle &=& \Trace_W ( H_W \rho_{W}^{(1)} )\nonumber \\
&=& \Trace_W ( H_W  \sum_{i,i',j}\braket{j^{(1)}}{i^{(0)}}\bra{i^{(0)}}\rho_R^{(0)}\ket{i'^{(0)}}\braket{i'^{(0)}}{j^{(1)}}\: \ketbra{\Delta_{i,j}}{\Delta_{i,j}}_W 
) \nonumber \\
&=&  \sum_{i,j}\braket{j^{(1)}}{i^{(0)}}\bra{i^{(0)}}\rho_R^{(0)}\ket{i^{(0)}}\braket{i^{(0)}}{j^{(1)}}\: (E_i^{(0)}-E_j^{(0)}) \nonumber \\
&=& \sum_i \bra{i^{(0)}}\rho_R^{(0)}\ket{i^{(0)}}E_i^{(0)}- \sum_{j} \bra{j^{(1)}} ( \sum_i \bra{i^{(0)}}\rho_R^{(0)}\ket{i^{(0)}} \proj{i^{(0)}} )\ket{j^{(1)}} E_j^{(0)} \nonumber\\
\label{eq:workclass}&=& \Trace_R ( (H_R^{(0)}-H_R^{(1)})\rho_R^{(0)}).
\end{eqnarray}
Note that condition (\ref{eq:workclass}) is a necessary condition for the set of operations of the work-extracting protocol. Therefore, for classical states of the battery, the quench formalism only can be applied to extract work if the initial state $\rho_R^{(0)}$ is diagonal.



\section{Motivation for taking the reduced of a Gibbs state as equilibrium state}\label{sec:equilibriumstate}

{We now turn to the discussion of the physical mechanism that renders the thermalisation map plausible. Indeed, it captures what one naturally would expect when bringing a small body into contact
with a heat bath. In the above axiomatic approach we again leave the mechanism unspecified; here, we will explain why the above framework is indeed very meaningful and physically plausible.
In in one way or the other, the evolution to an equilibrium Gibbs state is essential in the functioning of any thermal machine.}
{The precise setting considered, however, varies within recent approaches to the study of thermal machines.}
{Within} the formalism {presented in} Refs.\ \cite{EgloffRenner,Abergtrullywork} a classical system is put in contact with a thermal bath. The system is classical in the sense that it is described a state 
$\sigma_S=\sum_j \sigma_j \proj{j}$ {that is diagonal at all times}, where $\{\ket{j}\}$ {denotes} the eigenvectors of
 a {Hamiltonian} {$H_S$} in a given state of the process. The evolution towards the Gibbs state in this formalism states that the probability distribution 
 is modified and eventually reaches an equilibrium state given by 
\begin{equation}\label{hs}
	{\omega(H_S)=\frac{e^{-\beta H_S}}{\Trace ( e^{-\beta H_S})}.}
\end{equation}	
An {alternative} approach that has been employed successfully to the study of thermal machines is {rooted in the framework} of {quantum mechanical}
resource theories. {Within such resource theories, the allowed operations have to be specified, as well as the ``free resources''.}
{Here, the role of the ``free resources'' is assumed by Gibbs states with respect to some Hamiltonians and inverse temperature} 
\cite{Nanomachines,Popescu2013,Aberg}. The work extraction process is described by a global unitary {transformation} on the {sub-systems prepared in} {Gibbs} states, {a} system $S$,
{as well as} a battery. {Within such an approach, actual evolution generated by Hamiltonians is not made explicit, and neither is the dynamics leading to equilibration and thermalisation.}
Nevertheless, the {allowed resource states are Gibbs states, which are, even if this is not made explicit, of course the result of some equilibration process, possibly involving a larger system.
Again, the Gibbs states considered a resource are of the form as in Eq.\ (\ref{hs}), with the role of $H_S$ taken over by the Hamiltonians of the sub-systems constituting the resources.
In this sense, both approaches are similar in that they crucially rely on Gibbs states of Hamiltonians that are entirely non-interacting with any other part of the system.}

However, {such an assumption can be a rather implausible one in a number of situations. In fact, this assumption is often excessively restrictive, whenever sub-systems thermalizing are not
entirely decoupled from their environment.}
Gibbs states {have been shown to emerge in systems small systems very weakly interacting with a large physical body under a number of standard assumptions on the density of states
\cite{InteractingThermalisation}.} 
Such an approach is {meaningful} in a regime in which 
\begin{equation}
	\| V \|\ll {\beta}^{-1}. 
\end{equation}
As $\|  V \|$ is in general extensive, however, {and} ${1}/{\beta}$ is an intensive quantity, such {a} regime {is only meaningful} in spin chains or restricted {forms} 
of interactions \cite{ShortFarrelly12}. {One can surely hope for better bounds that also extend to wider range of physical situations. However, in 
systems with non-negligible interactions, one would not even expect the above to be a good approximation: One would not expect sub-systems to be well described by
Gibbs states with respect to the Hamiltonians of the respective sub-systems. Thermalisation then naturally rather means that the reduced states becomes locally indistinguishable from
the reduced state of a global Gibbs state (see, e.g., Refs.\ \cite{Kliesch2013,AcinIntensivetemp,Acinlocaltemp})}. {Specifically, if one thinks of a local Hamiltonian $H_{SB}$ that can for any
region of the lattice $S$ and its complement $B$ be decomposed into  
\begin{equation}
	H_{SB}= H_S+H_B+V, 
\end{equation}
one would not expect $\EE_t( \rho_{S} (t))$ to be close to 
$\omega(H_S)$: Surely the interaction captured by $V$ will alter $\EE_t( \rho_{S} (t))$ significantly.
In the light of these considerations, it seems inadequate to ground the analysis of thermal machines on the existence of resource systems prepared in equilibrium Gibbs states in situations
in which  interactions can not be considered negligible.}

{Still, Gibbs states of course play an important role in  the description of typical equilibrium reduced states of many-body systems, only that it is the Gibbs states of larger systems that have to be taken into
account.} Consider {again} a system $S$ and a system $B$ that embodies a large number of degrees of freedom, evolving under the Hamiltonian $H_{SB}=H_S+H_B +{V}$, where no assumption is made about the strength of the interaction {term} ${V}$. For typical {local} interactions and initial states, {and in the absence of local conserved quantities}, one expects that 
\begin{equation}
\EE_t \left\|  \rho_{S}(t)  - \Trace_B \left( \frac{e^{-\beta H_{SB}}}{\Trace ( e^{-\beta H_{SB}})} \right)  \right\|_1   \ll 1,
\end{equation}
{where $\EE_t$ denotes the expectation in time.}
{This is a consequence of the sub-system being close in trace-norm for most times if the so-called effective dimension is large \cite{Linden2012,ShortFarrelly12},
and the expectation that the time averaged state reduced to $S$ is indistinguishable from $\Trace_B(\omega(H))$.}
That is, {again}, sub-systems {are} for {most times} {expected to be operationally} indistinguishable from the reduced state of {the} Gibbs state on a the {entire} 
system $SB$. This is precisely the kind of evolution towards equilibrium on which we base our description of thermal machines. 

\begin{assum}[{Thermalisation in the presence of interactions}]\label{ass:thermal} {Consider} a system composed of a sub-system $S$, a bath $B$ and a battery $W$.
{This assumption states that one} can place an interaction ${V}$ between the sub-system and the bath {such} that the evolution under the Hamiltonian 
\begin{equation}
H_{SBW}=H_S+H_B+{V}+H_W=H_{SB}+H_W
\end{equation}
for any initial state $\rho_{SBW}(t=0)$ and after {an appropriately chosen} relaxation time $\tau$ fulfills 
\begin{equation}\label{eq:thermalT}
	 \rho_S  (t=\tau)= \Trace_{B} \left( \frac{e^{-\beta H_{SB}}}{\Trace ( e^{-\beta H_{SB}})} \right).
\end{equation}
\end{assum}

{The time $\tau>0$ may well be chosen probabilistically based on a suitable measure, 
and the statement can be weakened to be true with overwhelming probability. Surely, one would expect $\rho_S$ to be locally close to the
reduction of the time average for the overwhelming proportion of, but not all, times \cite{Linden2012,ShortFarrelly12}. However, precise error bounds for the equilibration time beyond free models \cite{Cramer_etal08}
are still an arena of active research. For the purposes of the present work, therefore, we will take the pragmatic attitude that appropriate times $\tau$ 
can be taken such that 
the natural condition Eq.\ (\ref{eq:thermalT}) holds true. In the framework of our formalism, this assumption will be taken as a physically plausible assumption, and no attempts are being
made as to deriving bounds to equilibration times.
}

Treating the thermalisation map (\ref{thermal})
as the result of an {actual} time evolution compels one to apply also a time evolution to the battery. As we discuss in   \ref{sec:coherenceloss} this will result in a loss of the coherence of the battery, which 
{renders it} in general impossible to perform further Hamiltonian quenches on $SB$. However, in realistic situations, the thermal machine $SBW$ can be assumed to be weakly interacting with a surrounding environment. This will effectively produce decoherence -- that is, damping the off diagonal terms {in} the Hamiltonian eigenbasis \cite{Linden2012}. As there is no interaction between $SB$ and $W$, 
however, both are weakly interacting with a local environment, decoherence is {is expected to be most effective} 
on the product eigenbasis of $H_{SB}+H_{W}$. This effect, as we show in (\ref{eq:workclass}) allows one to perform further quenches without the need of coherence.  

\begin{assum}[{Decoherence map}]\label{ass:decoherence}
{Consider} a system composed of a system $SB$ and a battery $W$, {equipped with} a non-interacting Hamiltonian
\begin{equation}
H_{SBW}=H_{SB}+H_W.
\end{equation}
{This assumption states that}
the evolution induced by the interaction of $SBW$ with a {suitable natural} environment $E$ is equivalent {with the application of} a decoherence map $\mathcal{E}$ described by
\begin{equation}
\mathcal{E}(\rho_{SBW})=\sum_{i,w} (\rho_{SBW})_{i,i}^{w,w}\proj{i}\otimes \proj{w},
\end{equation}
where $\rho_{SBW}=\sum_{i,i',w,w'}  (\rho_{SBW})_{i,i'}^{w,w'} \ketbra{i}{i'} \otimes \ketbra{w}{w'}$, $H_{SB}=\sum_{i} E_i \proj{i}$ and \rodrigo{$H_W=\sum_{w}w \proj{w}$. }
\end{assum}

\section{Quenches on equilibrated systems}\label{sec:quenchesequilibrated}

{We will now turn to analyzing the} formalism of quenches described in Section \ref{sec:quenches} when the change of Hamiltonian is implemented on a sub-system $S$ in contact with a thermal bath $B$. Consider an initial global state $\rho_{SBW}^{(0)}(t=0)$   and an initial Hamiltonian for the thermal machine $H^{(0)}_S+H_B+{V}+H_W$. We {then allow} this system {to}
equilibrate according to this Hamiltonian, so that the evolution fulfills 
{Assumptions} \ref{ass:thermal} and \ref{ass:decoherence}. Hence, at large enough time $\tau$ the state can be written as,
\begin{equation}
	\rho_{SBW}^{(0)}(\tau)= \sum_{i,w} (\rho^{(0)}_{SBW}(0))_i^w \: \proj{i^{(0)}}\otimes \proj{w}
\end{equation}
where $(\rho^{(0)}_{SBW}(0))_i^w =\bra{i^{(0)}}\otimes \bra{w}\rho^{(0)}_{SBW}(0)\ket{i^{(0)}} \otimes \ket{w}$ and 
\begin{equation}
	H_{SB}^{(0)}:= H^{(0)}_S+H_B+{V}=\sum_{i} E_i^{(0)} \proj{i^{(0)}}. 
\end{equation}
Also, the equilibrated state fulfills
\begin{equation}\label{eq:stateisthermal}
	\rho_{S}^{(0)}(\tau)=\Trace_{B} \left( \frac{e^{-\beta H_{SB}}}{\Trace ( e^{-\beta H_{SB}})} \right).
\end{equation}
At time $\tau$ we perform a quench $H^{(0)}_S+H_B+{V} \mapsto H^{(1)}_S+H_B+{V}=\sum_{i} E_i^{(1)} \proj{i^{(1)}}$. The state after the quench $\rho_{SBW}^{(1)}(t=T)$ {satisfies}
\begin{equation}
\rho_{SBW}^{(1)}(\tau)= \sum_{i,w} (\rho^{(0)}_{SBW}(0))_i^w \: U^{\text{on}}_{RW} \proj{i^{(0)}}\otimes \proj{w} U^{\text{on}\dagger}_{RW},
\end{equation}
where $U^{\text{on}}$ is the quench unitary as defined in Eq.\ (\ref{eq:unitaryquench}).
Hence, the work extracted at the battery is
\begin{eqnarray}
\langle W \rangle &=& \Trace_{W}(H_W \rho_{W}^{(1)}(\tau) )- \Trace_{W}(H_W \rho_{W}^{(0)}(\tau) )\nonumber \\
\nonumber &=& \sum_{i,w} (\rho^{(0)}_{SBW}(0))_i^w \Trace_{W}(H_W \Trace_{SB}( U^{\text{on}}_{RW} \proj{i^{(0)}}\otimes \proj{w} U^{\text{on}\dagger}_{RW})) \nonumber\\
&-& \sum_{i,w} (\rho^{(0)}_{SBW}(0))_i^w \Trace_{W}(H_W   \Trace_{SB}(\proj{i^{(0)}}\otimes \proj{w} ) \nonumber\\
 \label{eq:quenchthermal1}&=& \sum_{i,w} (\rho^{(0)}_{SBW}(0))_i^w ( \Trace_{SB}((H_{SB}^{(0)}-H_{SB}^{(1)})\proj{i^{(0)}} ) +w ) \\
&-& \sum_{i,w} (\rho^{(0)}_{SBW}(0))_i^w w \nonumber\\
&=& \Trace_{SB}\left((H_{SB}^{(0)}-H_{SB}^{(1)}) \rho^{(0)}_{SB}(\tau)\right) \nonumber\\
&=& \label{eq:quenchthermal2}\Trace_{SB}\left(\left((H_{S}^{(0)}-H_{S}^{(1)})\otimes \mathbb{I}_B\right) \frac{e^{-\beta H_{SB}}}{\Trace ( e^{-\beta H_{SB}})} \right),
\end{eqnarray}
where Eq.\ (\ref{eq:quenchthermal1}) follows from Eq.\ (\ref{eq:workclass}), and Eq.\ (\ref{eq:quenchthermal2}) from Eq.\ (\ref{eq:stateisthermal}).

\section{Physical protocol saturating {the} work extraction bound}\label{sec:physicalprot}

We now combine the {statements} of Eqs.\ (\ref{eq:workbattery3}) and (\ref{eq:quenchthermal2}) in order 
to show that the work extraction protocol as defined in Section \ref{sec:set_of_operations} can be implemented.

\begin{cor}[{Physical implementation in a unitary framework}]\label{thm:optimalprotocol} 
Given {an} initial {state} of the form $\rho^{(0)}=\rho^{(0)}_{SB}\otimes \proj{\Psi_{\sqcap}}_W \otimes \proj{0}_Q$, with $\ket{\Psi_{\sqcap}}_W$ as defined in (\ref{eq:defstatecoherent}), and an {arbitrary} initial Hamiltonian $H^{(0)}$. {Assuming the validity of} Assumptions \ref{ass:thermal} and \ref{ass:decoherence}, any protocol $\mathcal{P}$ can be implemented with a unitary transformation acting on a sytem composed of the thermal machine $SBW$, the control qubit $Q$ and an environment $E$. 
\end{cor}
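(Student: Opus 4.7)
The plan is to proceed by induction on the number of operations in the protocol $\mathcal{P}$, assembling the required global unitary on $SBWQE$ as a product of elementary building blocks, one for each Hamiltonian quench and one for each thermalisation. First, I would fix the initial joint state as $\rho^{(0)}_{SB}\otimes \ketbra{\Psi_{\sqcap}}{\Psi_{\sqcap}}_W\otimes\ketbra{0}{0}_Q\otimes \rho_E^{(0)}$, where $\ket{\Psi_{\sqcap}}_W$ is the coherent ``flat'' battery state of (\ref{eq:defstatecoherent}) with width $\Delta$ large enough that the parameter $\epsilon$ in (\ref{eq:condition3}) is negligible for every quench appearing in $\mathcal{P}$; this is possible because $\mathcal{P}$ is a finite sequence with a bounded total spread of spectral gaps. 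The choice of $\Delta$ is what makes the battery carry enough coherence to implement any quench of $\mathcal{P}$ without disturbing the reduced state on $SB$, as established in Theorem \ref{thm:unitary}.

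Second, I would realise each quench $H^{(i)}\mapsto H^{(i+1)}$ by the unique unitary $U_i$ of Theorem \ref{thm:unitary}, applied to the register $SBW$ together with the control qubit $Q$ (whose role switches $H_{SB}^{(i)}$ on and $H_{SB}^{(i+1)}$ off through the control Hamiltonian (\ref{eq:quenchhamiltoniansimple})). For the very first quench, the battery is in $\ket{\Psi_{\sqcap}}_W$, so by the calculation culminating in (\ref{eq:workbattery3}) the reduced state on $SB$ is preserved and the battery gains (or loses) exactly $\tr_{SB}\!\big((H_{SB}^{(0)}-H_{SB}^{(1)})\rho_{SB}^{(0)}\big)$. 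For each later quench I would rely on the analogous but simpler classical‐battery calculation leading to (\ref{eq:workclass}), which yields the same work increment $\tr_{SB}\!\big((H_{SB}^{(i)}-H_{SB}^{(i+1)})\rho_{SB}^{(i)}\big)$ provided the pre‐quench state of $SB$ is diagonal in the eigenbasis of $H_{SB}^{(i)}$. Ensuring this diagonality is the purpose of the next step.

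Third, I would implement each thermalisation of the abstract protocol as the composition of a free unitary evolution of $SBW$ under $H_{SB}^{(i)}+H_W$ for the relaxation time $\tau$ supplied by Assumption \ref{ass:thermal}, followed by the decoherence map on $SBW$ induced by the environment $E$ through Assumption \ref{ass:decoherence}. By Assumption \ref{ass:thermal} the reduced state on $S$ after the free evolution equals $\tr_B(\omega(H_{SB}^{(i)}))$, matching (\ref{thermal}); by Assumption \ref{ass:decoherence} the joint state on $SBW$ becomes diagonal in the product eigenbasis of $H_{SB}^{(i)}+H_W$, which is precisely the hypothesis needed to invoke (\ref{eq:workclass}) at the next quench. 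The dilation of this decoherence map to a unitary on $SBWE$ is standard (Stinespring), so the thermalisation step is also cast as a unitary on the extended system. Computing the energy balance exactly as in (\ref{eq:quenchthermal1})--(\ref{eq:quenchthermal2}), the unitary realisation reproduces the abstract formula for work increments used in the proof of Theorem \ref{mtthm:work}.

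Finally, I would glue the blocks together: set $U_{\mathcal{P}}:= \prod_i V_i^{\rm therm}\, U_i$, acting on $SBWQE$, and check by induction that (i) the marginal on $SB$ after step $i$ coincides with the $\rho_{SB}^{(i)}$ of the abstract protocol, (ii) the marginal on $W$ remains sufficiently close to an eigenstate mixture between thermalisations so that subsequent quenches can be accounted for via (\ref{eq:workclass}), and (iii) the total energy deposited in $W$ equals $\langle W\rangle(\mathcal{P},H^{(0)},\rho_S^{(0)})$. The main obstacle I expect is the bookkeeping at the junction between a quench and the following thermalisation, because the quench unitary of Theorem \ref{thm:unitary} correlates $SB$ with $W$ through the shift operators $\Gamma_W(\Delta_{i,j})$, and one must verify that the decoherence of Assumption \ref{ass:decoherence} really does destroy these correlations on the product eigenbasis of $H_{SB}^{(i)}+H_W$, so that the subsequent classical-battery formula (\ref{eq:workclass}) applies verbatim and the induction closes. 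Once this is checked, the identity of the two work expressions is immediate and the corollary follows.
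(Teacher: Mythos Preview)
Your proposal is correct and follows essentially the same route as the paper: the first quench is implemented via Theorem~\ref{thm:unitary} using the coherent battery state $\ket{\Psi_{\sqcap}}_W$ and accounted for by Eq.~(\ref{eq:workbattery3}), while every subsequent quench---occurring after a thermalisation (Assumption~\ref{ass:thermal}) composed with the decoherence map (Assumption~\ref{ass:decoherence})---acts on an $SB$-state diagonal in the eigenbasis of $H_{SB}^{(i)}$ and is accounted for by the classical-battery computation (\ref{eq:workclass})/(\ref{eq:quenchthermal2}). Your version is more explicit (the induction, the choice of $\Delta$, the Stinespring dilation of $\mathcal{E}$), and the ``main obstacle'' you flag is precisely what Assumption~\ref{ass:decoherence} is designed to resolve, since it diagonalises $\rho_{SBW}$ in the product eigenbasis of $H_{SB}^{(i)}+H_W$ and hence kills the $SB$--$W$ correlations created by the shift operators $\Gamma_W$.
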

\begin{proof}
{This statement} follows straightforwardly {from} Assumptions \ref{ass:thermal} and \ref{ass:decoherence}, and Eq.\ (\ref{eq:workbattery3}) and (\ref{eq:quenchthermal2}). 
Given the initial state $\rho^{(0)}=\rho^{(0)}_{SB}\otimes \proj{\Psi_{\sqcap}}_W \otimes \proj{0}_Q$, (\ref{eq:workbattery3}) shows that the {quench} unitary (\ref{eq:unitaryquench}) performs the first Hamiltonian transformation of an arbitrary protocol $\mathcal{P}$ -- before the first state thermalisation --  so that it fulfills condition (\ref{eq:mt-quenchcondition}). Then, the unitary evolution under of the composed system $SBWQE$ {satisfying} Assumptions \ref{ass:thermal} and \ref{ass:decoherence} {results in} further quenches {fulfilling} (\ref{eq:quenchthermal2}), which in turn implies that it fulfills (\ref{eq:mt-quenchcondition}) when applied on thermalised states as in Eq.\ (\ref{thermal}).
\end{proof}

\section{Coherence in the battery and time evolution}\label{sec:coherenceloss}
As we {have discussed} in \ref{sec:coherenceforquenches}, a coherent state of the battery allows one to perform a Hamiltonian quench. This can be easily seen from (\ref{eq:stateafterquench}), if one applies a quench to an initial state of the form 
\begin{equation}
	\rho_{RWQ}^{(0)}=\rho_R^{(0)} \otimes  \ketbra{\Psi_{\sqcap}}{\Psi_{\sqcap}} _W\otimes \ketbra{0}{0}_Q
\end{equation}
 -- $R$ plays the role of system plus bath -- the reduced final state on $R$ does not change, that is 
 \begin{equation}
 	\rho_{R}^{(1)}=\rho_{R}^{(0)}.
\end{equation}	
Let us suppose that now we let the system $RW$ undergo a time-evolution under the Hamiltonian $H^{(1)}_R+H_W$ -- this is precisely what one {does} if $R$ {embodies both} 
a {system $S$ and a bath}, and the time-evolution is {intended} to drive $\rho_{R}^{(1)}$ towards a thermalised state of the form (\ref{thermal}). How does this time evolution {affect} 
the coherence in the state of the battery? {Is the battery still coherent} so that it can perform further quenches? Here we show that this is not the case. Coherence is a resource that gets lost under 
{such a} time evolution. To see this, let us {compute} the time-evolved state after time $t$ of $\rho_{R}^{(1)}$ which is {given by}
\begin{eqnarray}\label{eq:evolvedrw}
&&\rho_{RW}^{(1)}(t)=\sum_{i,i',j,j'} \braket{j^{(1)}  }{i^{(0)}  } \bra{i^{(0)} } \rho_R^{(0)}  \ket{i'^{(0)} } \braket{i'^{(0)}  }{j'^{(1)}  }  e^{-i(E_j^{(1)}-E_{j'}^{(1)})t}\ketbra{j^{(1)} }{j'^{(1)}  }_R \nonumber \\
\nonumber &\otimes & \frac{1}{N(E^{(0)}_W,\Delta)} \sum_{w=E^{(0)}_W}^{E^{(0)}_W+\Delta} e^{-i(\Delta_{i,j}+w)t}\ket{\Delta_{i,j}+w}\sum_{w'=E^{(0)}_W}^{E^{(0)}_W+\Delta}e^{i(\Delta_{i',j'}+w)t} \bra{\Delta_{i',j'}+w'}.
\end{eqnarray}
From this equation one can {straightforwardly}, but tediously, {conclude} that 
\begin{equation}
	\rho_{R}^{(1)}(t)=e^{-iH_R^{(1)}t}\rho_{R}^{(0)}e^{iH_R^{(1)}t},
\end{equation}	
that is, as one should expect, the initial state evolved under $H_R^{(1)}$ {at} time $t$. Now, if one intends to perform further quenches on this state -- that is, a unitary of the form (\ref{eq:unitaryquench}) 
{changing} $H_{R}^{(1)}$ to $H_{R}^{(2)}$ without altering the state on $R$ -- one finds that this is not possible, because the state of the battery has been changed by {the}
evolution under $H_W$ and it no longer serves as a coherent resource fulfilling (\ref{eq:conditionbattery}). This can be shown by a {tedious} calculation applying the unitary (\ref{eq:unitaryquench}) {on}
(\ref{eq:evolvedrw}). To avoid such {a} calculation and {merely} grasp the intuition behind the mechanism, note that the state vector
\begin{equation}
	\ket{\Psi_{\sqcap}(t)}=\frac{1}{{N(E^{(0)}_W,\Delta)^{1/2}}} \sum_{w=E^{(0)}_W}^{E^{(0)}_W+\Delta} e^{-i(\Delta_{i,j}+w)t}\ket{\Delta_{i,j}+w}
\end{equation}
no longer fulfills (\ref{eq:conditionbattery}) when a new quench from $H_R^{(1)}$ to $H_{R}^{(2)}$ --with energy gaps $\Delta_{i,j}^{(2)}$-- is applied. Indeed, it is easy to see that for {most times} $t$ 
\begin{equation}
\bra{\Psi_{\sqcap}(t)}\Gamma(\Delta^{(2)}_{i,j} -\Delta^{(2)}_{i',j'})\ket{\Psi_{\sqcap}(t)} \approx 0.
\end{equation}
In other words, the coherence of the battery is lost due to time evolution under its own Hamiltonian $H_W$, and this {is an obstacle against performing} further quenches in general. In the specific protocol leading to Corollary \ref{thm:optimalprotocol}, further quenches can be applied because coherence is no longer needed after the decoherence map specified in Assumption \ref{ass:decoherence}
{has been applied}. 
We {expect} this decoherence map to {reasonably} represent {plausible and realistic} physical situations. However, {it should be clear} that alternative protocols in which, for instance, coherence is re-established in the battery by a certain operation -- possibly employing a device playing the role of a source of coherence -- are {also}
of great interest. As a matter of fact, the role of {coherence} and how it should be accounted for as a resource in thermodynamics is an interesting open question that we leave open for {future} 
work. Note that the role played by coherence {in the present work} is quite different from the one {taken} in {Ref.} \cite{Aberg}. There, coherence is a catalytic resource, in the sense that it is not 
{consumed} 
in the protocol an can be re-used an arbitrary number of times. Our analysis points out that such catalytic role of the coherence may be only an {artefact} 
of the {specific} framework of operations considered {there}, where time-evolutions are {not taken into account}.

\section{Spread of energy probability distribution and single-shot considerations}

\label{sec:discussion-single-shot}
{As far as work extraction is concerned}, in {our} work, we follow the approach of, e.g.,  Ref.\ \cite{Popescu2013b} and consider average work extraction. 
{Our results hence apply to the expected work for individual systems}. Note we do not have to assume at any point, --similarly to as in Ref.\ \cite{Popescu2013}-- that we process $N$ copies collectively in order to obtain (\ref{eq:mt-boundworkmaintext}). Due to linearity of the work extraction process, it is implied by {a basic argument of typicality} 
that when processing $N$ copies, the total work extracted per copy will be {essentially}
deterministic in the limit of large $N$ -- the variance increases with $\sqrt{N}$ and the total work with $N$. However, it is still of interest to analyse the spread of the probability distribution of the energy in the battery for a single copy. This is relevant {with} generalisations to single-shot work extraction in the spirit of {Refs.} \cite{Abergtrullywork,Nanomachines,EgloffRenner} {in mind}. 
Note that such analysis is out of place within the abstract formalism defined in Section  \ref{sec:set_of_operations}: The operations just preserve 
{the expected} energy, thus transformations reducing arbitrarily the spread of the energy of the battery are allowed, similarly as in the formalism defined in {Ref.} \cite{Popescu2013b}. Nonetheless, note that the unitary implementation of the protocol of Corollary \ref{thm:unitary} does preserve the probability distribution of the {entire}
machine $SBW$. This is the case because (i) the unitary defined in Theorem \ref{thm:unitary} does not only {preserve} 
the mean total energy, but {it} also commutes with the total Hamiltonian and (ii) the dephasing map employed when the system relaxes to an equilibrium state, as defined in Assumption \ref{ass:decoherence}, by definition preserves the probability distribution of energies of  $SBW$. Therefore, one could restrict the set of operations defined in Section  \ref{sec:set_of_operations}, by substituting the assumption of mean-energy conservation for a conservation of the probability distribution of total energy, and a protocol saturating (\ref{eq:mt-boundworkmaintext}) would still be attainable. In conclusion, the formalism itself, {in contrast} 
to the one in {Ref.} \cite{Popescu2013b}, can be easily modified to account for a possible generalisation in therms of single-shot work extraction.

Nevertheless, {there is} another issue {that} prevents one from applying {straightforwardly} the findings of {Refs.} 
\cite{Abergtrullywork,Nanomachines,EgloffRenner}: This is the impossibility of performing quenches with deterministic classical states of the battery. As detailed in Section
 \ref{sec:coherenceforquenches}, one needs to employ initial state vectors of the battery $\ket{\psi_{\sqcap}}_W$. Therefore, the initial probability distribution of energies of the battery is already 
{``infinitely spread out''}. 
As discussed in Ref.\ \cite{Aberg}, a distinction between ordered work -- as the single-shot work extraction -- and disordered work would need to take into account the energy carrier -- in this case the battery -- and how the initially spread distribution of the battery is affected by the protocol. We leave this as an interesting open question that lies out of the scope of this {work}.

%

\section{Typicality of irreversibility and second law}\label{sec:irreversibility}
{We now turn to the discussion of the typicality of irreversibility and the relationship to an instance of a second law.}
The equivalence between optimality and reversibility in work extraction protocols has been widely known in the context of phenomenological thermodynamics,
{the} analysis of the Carnot engine being the most seminal example. More {generally}, Clausius' theorem states that overall heat flow vanishes over \emph{all} reversible cyclic processes. That is,  
\begin{equation}
\oint_{\text{rev}} \frac{\delta Q}{T}=0
\end{equation}
where $\delta Q$ is the inexact differential of the heat $Q$ and $T$ is the temperature. This motivates the definition of the entropy state function as $dS :=  {\delta Q}/{T}$, 
{$T$ taking the role of the integrating factor.}
Furthermore, Clausius' inequality establishes that for general processes -- not necessarily reversible or cyclic -- it is {true that}
\begin{equation}\label{eq:clausiusgen}
\frac{\Delta Q}{T}=\int_i^f \frac{\delta Q}{T} \leq \Delta S =S_{f}-S_{i},
\end{equation}
where {equality} holds in the reversible case.

{This} theorem is formulated within the framework of phenomenological thermodynamics. {However}, {similar} expressions can be shown to hold {within a statistical mindset}
with {the} von Neumann entropy {taking over} the role of thermodynamic entropy \cite{Anders2013}. Indeed, in the weak-coupling {setting}, 
it is not difficult to show that (\ref{eq:clausiusgen}) is indeed equivalent to the {bounds on expected work extraction}
in terms of free-energy difference, and also that optimal work extraction processes are reversible. 

To see this, consider a protocol of work extraction by Hamiltonian quenches as defined in Section  \ref{sec:set_of_operations}. In the weak-coupling regime, the state thermalisation map (\ref{thermal}) 
is replaced by {$\rho_{S}^{(i)}\mapsto {\omega}(H_{S}^{(i)})$}. Equivalent with Eq.\ (\ref{eq:mt-workgeneral}), the {expected}
work extracted in a general protocol in the weak-coupling limit is {given by}
\begin{equation}
\langle W^{\text{wc}} \rangle ( \mathcal{P},H_{S}^{(0)},\rho_{S}^{(0)})= \Trace ( \rho_{S}^{(0)}(H_{S}^{(0)}-H_{S}^{(1)}) ) + \sum_{i=1}^{n-1} \Trace  ({\omega}(H_{S}^{(i)}) ( H_{S}^{(i)}- H_{S}^{(i+1)}) ),
\end{equation}
{which}, recalling (\ref{eq:workweakcoupling}), fulfills
\begin{equation}\label{eq:workboundweak}
\max_{\mathcal{P}} \langle W^{\text{wc}} \rangle ( \mathcal{P},H_{S}^{(0)},\rho_{S}^{(0)}) \leq F( \rho_{S}^{(0)},H_{S}^{(0)} )-F( {\omega}(H_{S}^{(0)}),H_{S}^{(0)}).
\end{equation}
{Equality is achieved here} by a reversible process. Now let us see that {a} similar conclusion can be reached from (\ref{eq:clausiusgen}). If we define the heat flow $\Delta Q$ as the energy lost by the bath -- or {equivalently}, the energy gained by the system in the state thermalisation process -- one can see that 
\begin{eqnarray}
\nonumber\langle \Delta Q^{\text{wc}} \rangle ( \mathcal{P},H_{S}^{(0)},\rho_{S}^{(0)})&=& \Trace ( ({\omega}(H_{S}^{(1)})-\rho_{S}^{(0)})H_{S}^{(1)} ) \\&+& \sum_{i=1}^{n-1} \Trace (({\omega}(H_{S}^{(i+1)})-{\omega}(H_{S}^{(i)} ))H_{S}^{(i+1)}) )\\
\nonumber &=& -\Trace ( \rho_{S}^{(0)}H_{S}^{(1)} ) +\sum_{i=1}^{n-1} \Trace ({\omega}(H_{S}^{(i)}) ( H_{S}^{(i)}- H_{S}^{(i+1)}) )\\&+& \Trace ({\omega}(H_{S}^{(n)}) H_{S}^{(n)})\\
&=&\langle W^{\text{wc}} \rangle ( \mathcal{P},H_{S}^{(0)},\rho_{S}^{(0)})+ \langle \Delta E \rangle_S
\end{eqnarray}
where 
\begin{equation}
	\langle \Delta E \rangle_S:= \Trace ({\omega}(H_{S}^{(n)}) H_{S}^{(n)})-\Trace ( \rho_{S}^{(0)} H_{S}^{(0)}) 
\end{equation}
is the {expected} energy difference between the initial and final state. Therefore, identifying $\langle \Delta Q^{\text{wc}} \rangle ( \mathcal{P},H_{S}^{(0)},\rho_{S}^{(0)})$ with the heat flow, and the von Neumann entropy with the thermodynamic entropy in (\ref{eq:clausiusgen}), one obtains
\begin{eqnarray}
\langle W^{\text{wc}} \rangle ( {\mathcal{P}},H_{S}^{(0)},\rho_{S}^{(0)})&\leq& T \Delta S_S -\langle \Delta E \rangle_S\nonumber \\
&=&F( \rho_{S}^{(0)},H_{S}^{(0)} )-F( {\omega}(H_{S}^{(0)}),H_{S}^{(0)}),
\end{eqnarray}
where, according to Clausius' theorem, {equality} {again} holds when the process is reversible. This equivalence between Clausius' theorem and the work extraction bounds means that indeed 
(\ref{eq:workboundweak})
{may} 
be understood as an alternative formulation of the second law of thermodynamics {applied to expectation values}. 
Also, the fact that there exist an optimal reversible protocol saturating (\ref{eq:workboundweak})
is to be understood as saturation of the second law.

Let us now investigate the situation where the interaction between bath and system is not necessarily weak and the thermalisation map is of the form (\ref{thermal}). As anticipated in Section \ref{sec:set_of_operations}, in general the coupling between bath and systems {prevents} one {from saturating} the second law {in the form stated above} 
and to perform reversible processes. 

The first difference when analyzing the strong-coupling case, is that the very definition of heat is problematic. In a system evolving from $\rho_{SB}^{(i)}$ to $\rho_{SB}^{(i)}$ {equipped}
with the Hamiltonian $H_{S}+H_{B}+V_{SB}$, it is {not quite clear} how much energy is lost by the bath -- this is the canonical definition of heat -- because the energy {contribution}
of the interaction is not negligible, and it is not obvious which part corresponds to the bath and to the system. To motivate a way of circumvent this problem, let us consider an specific example. Let us partition the bath $B$ into two regions $B_b$  {(the buffer)} and $B_r$ {(the reservoir)}. The buffer represents the region of the bath that is surrounding they system $S$ and the reservoir is the region that is not directly in contact with $S$. Let us suppose that $B_b$ and $B_r$ are weakly coupled, so that the {operator} 
norm $V_{B_b,B_r}$ is much smaller {than} 
the energy gaps of their respective Hamiltonians. This would be the case if, for instance, $S$ and $B_b$ are {parts} 
of a conducting material, and $B_r$ is just a {surrounding} gas that interacts weakly with $B_b$. For such setup the equilibration towards equilibrium of $S$ will fulfill,
\begin{equation}\label{eq:thermalbuffer}
	\rho_{S,B_b}=\omega (H_{SB_b}),
\end{equation}
where $H_{SB_b}=H_S+H_{B_b}+V_{SB_b}$, and $V_{SB_b}$ is an arbitrarily strong interaction that has only support in $B_b$ ({but} not in $B_r$). In this case, weak interaction between $B_b$ and $B_r$ establishes a clear cut that allows on to {unambiguously} define  the energy that was lost by the the reservoir $B_r$ -- {in contrast} to the energy that {has} flown from $B_b$ to $S$ that is ambiguous due to the strong coupling in $V_{B_b,S}$.  Hence, the definition of heat can {be} 
made unambiguous as the energy lost by the reservoir $B_r$, or equivalently, the energy gained by $SB_b$. Taking this as the definition of heat, one obtains
\begin{eqnarray}
	\langle \Delta Q \rangle ( \mathcal{P},H_{SB_b}^{(0)},\rho_{SB_b}^{(0)})&=& 
	\Trace \left( \left(\omega(H_{SB_b}^{(1)})-\rho_{SB_b}^{(0)}\right)H_{SB_b}^{(1)} \right) \\&+& \sum_{i=1}^{n-1} \Trace \left (\left (\omega(H_{SB_b}^{(i	+1)})-\omega(H_{SB_b}^{(i)} )\right)H_{SB_b}^{(i+1)} \right)\nonumber\\
	\nonumber &=& -\Trace \left( \rho_{SB_b}^{(0)}H_{SB_b}^{(1)} \right) +\sum_{i=1}^{n-1} \Trace \left (\omega(H_{SB_b}^{(i)}) ( H_{SB_b}^{(i)}- H_{SB_b}^{(i+1)}) \right)\\&+& \Trace \left (\omega(H_{SB_b}^{(n)}) H_{SB_b}^{(n)} \right)\nonumber\\
	&=&\langle W \rangle ( \mathcal{P},H_{SB_b}^{(0)},\rho_{SB_b}^{(0)})+ \langle \Delta E \rangle_{SB_b},
\end{eqnarray}
where 
\begin{equation}
	\langle \Delta E \rangle_{SB_b}:=\Trace \left (\omega(H_{SB_b}^{(n)}) H_{SB_b}^{(n)} \right)- \Trace \left( \rho_{SB_b}^{(0)}H_{SB_b}^{(0)} \right). 
\end{equation}
Using (\ref{eq:clausiusgen}) and identifying $\Delta S = S ( \omega(H_{SB_b}^{(n)}) )- S ( \rho_{SB_b}^{(0)})$, one obtains
\begin{eqnarray}
	\langle W \rangle ( \mathcal{P},H_{SB_b}^{(0)},\rho_{SB_b}^{(0)})&\leq & T \Delta S_{SB_b} -\langle \Delta E \rangle _{SB_b}\nonumber\\
	&=&F\left( \rho_{SB_b}^{(0)},H_{SB_b}^{(0)} \right)-F\left( \omega(H_{SB_b}^{(0)}),H_{SB_b}^{(0)}\right)\nonumber\\
	\label{eq:frevappears}&=&- \Delta F_{\text{rev}},
\end{eqnarray}
where (\ref{eq:frevappears}) is {a consequence of} Theorem \ref{thm:optimalprotocol}
and {taking} $\tilde{\rho}_{SB_b}= \rho_{SB_b}^{(0)}$.

Lastly, in the case of the thermalisation map of the form (\ref{thermal}), where no assumption is made about a cut between {the} buffer and {the}
reservoir, the {entire} bath has to be considered the buffer $B_b$ and the reservoir is not present. Then, in analogy to (\ref{eq:thermalbuffer}), if we strengthen condition (\ref{thermal}) by assuming that the equilibrium state fulfills
\begin{equation}
\rho_{SB}=\omega(H_{SB}),
\end{equation}
one can define heat unambiguously as the energy gained by the whole machine -- which {vanishes} by {an argument based on the} conservation of energy. Indeed, we find 
\begin{eqnarray}\label{eq:ceroheat}
\langle \Delta Q \rangle ( \mathcal{P},H_{SB}^{(0)},\rho_{SB}^{(0)})
&=&\langle W \rangle ( \mathcal{P},H_{S}^{(0)},\rho_{S}^{(0)})+ \langle \Delta E \rangle_{SB}=0,
\end{eqnarray}
where last equality follows simply from {expected} energy conservation. Therefore, in a scenario {based on a} a thermalisation map of the kind {considered in Eq.} (\ref{thermal}), the second law can be written simply as
\begin{equation}\label{eq:secondlaw}
0 \leq \Delta S_{SB}
\end{equation}
where {equality} is fulfilled by {a} reversible process. This together with (\ref{eq:ceroheat}) gives again
\begin{eqnarray}
	\langle W \rangle ( \mathcal{P},H_{SB}^{(0)},\rho_{SB_b}^{(0)})&\leq & T \Delta S_{SB} -\langle \Delta E \rangle _{SB}\nonumber\\
	&=&F\left( \rho_{SB}^{(0)},H_{SB}^{(0)} \right)-F\left( \omega(H_{SB}^{(0)}),H_{SB}^{(0)}\right)\nonumber\\
	\label{eq:frevappears2}&=&- \Delta F_{\text{rev}},
\end{eqnarray}
where the equality is {satisfied} by reversible protocols of work extraction. The bound of {Theorem} \ref{thm:optimalprotocol} thus \je{usually}
imposes a limitation, quantifiable by $\Delta F_{\text{irrev}}$, {against} 
{saturating}  the second law of thermodynamics (\ref{eq:secondlaw}). The reason, as the very formulation of the second law by Clausius' theorem already {takes into account},
 is that the process is not reversible. This can be easily seen from Eqs.\ (\ref{eq:realisationfinal}, \ref{eq:realisationfinal2}). The optimal protocol specifies a Hamiltonian $H_{SB}^{(1)}$, and 
 {parametrized curve} of Hamiltonians describing a trajectory from $H_{SB}^{(1)}$ to $H_{SB}^{(0)}$. Now one can reverse the protocol, that is, given $\mathcal{P}$ by $\{H_{SB}^{(i)}\}_{i=1}^{n-1}$ and $\textbf{k}$, we define the inverse protocol $\mathcal{P}^{-1}$ by $\{H_{SB}^{(n-i)}\}_{i=1}^{n-1}$ and $\textbf{k}^{-1}(i):=  \textbf{k}(n-i)$, and a simple calculation shows
\begin{eqnarray}
\nonumber \langle &W \rangle  (\mathcal{P}^{(-1)},H_{SB}^0, \omega_S(H_{SB}^{(0)}))\\
 &= \Trace \left( \omega(H_{SB}^{(1)})(H_{S}^{(1)}- H_{S}^{(0)}) \right ) + F\left( \omega (H_{SB}^{(0)}),H_{SB}^{(0)}\right) -F\left( \omega (H_{SB}^{(1)}),H_{SB}^{(1)}\right) \nonumber\\
&=F\left( \omega (H_{SB}^{(0)}),H_{SB}^{(0)}\right) - F\left( \omega (H_{SB}^{(1)}),H_{SB}^{(0)}\right) \nonumber\\
&=\langle W \rangle (\mathcal{P},H_{SB}^0, \rho_{SB}^{(0)}))+ \Delta F_{\text{irrev}}.
\end{eqnarray}
That is, the work difference between the optimal protocol and its reversed protocol is precisely $\Delta F_{\text{irrev}}$. This quantity is exactly the amount by which the work extraction bounds differ from the maximum ones allowed by the second law {stated in the form} $0\leq \Delta S_{SB}$. Altogether, this suggests that Theorem \ref{thm:optimalprotocol} {may} 
be viewed as a generalisation of the second law of thermodynamics which accounts for strong couplings and the unavoidable irreversibility that it induces.

The irreversibility of the optimal process may result in {a}
tension with Theorem \ref{thm:optimalprotocol}, where it is shown that a global unitary evolution performs the optimal protocol, and therefore it must be reversible. This apparent paradox {is resolved} 
by noting that {being} 
reversible at the level of abstract protocols -- that is, as we define $\mathcal{P}^{-1}$ -- is not equivalent with {being} 
reversible in the sense of time-reversed implementation. 
Note that the time-reversed evolution can take {equilibrium} states to states out of equilibrium, however, 
a reversed protocol in the sense of $\mathcal{P}^{-1}$ does not allow for such passages from {equilibrium} to 
non-{equilibrium} states. This is precisely the case, for example, in the first step of the optimal protocol detailed in the proof of {Theorem}
\ref{thm:optimalprotocol}. There, the initial Hamiltonian $H_{SB}^{(0)}$ is quenched to $H_{SB}^{(1)}$, and then the state of $SB$ is driven to equilibrium, so that 
$\rho_{S}^{(0)} \mapsto \omega_{S}(H_{SB}^{(1)})$. Clearly, this equilibration is {eventually} due to {some} 
unitary evolution of the composed system $SB$, and indeed could be in principle reversed if one had control over the exact time that we waited until 
\begin{equation}
	\rho_{S}^{(0)}\rightarrow \omega_{S}(H_{SB}^{(1)})
\end{equation}
{has converged}.
However, at {the} 
abstract level {mainly considered here}, where work extraction protocols $\mathcal{P}$ are {being} defined, 
the protocols {neither explicitly} take {time} into account {nor any} other dynamical analysis of the state thermalisation. 
Therefore, a reversed protocol of the previous example would just {amount to}
a quench from $H_{SB}^{(1)}$ to $H_{SB}^{(0)}$ on the state $\omega_{S}(H_{SB}^{(1)})$. The use of the abstract map (\ref{thermal}) is grounded precisely in typicality arguments, as explained in \ref{sec:equilibriumstate}. {In} other words, the irreversibility exhibited by the optimal protocols, should be understood also as a feature of typicality: 
Given the precise times that one {has} 
waited in each equilibration process, ${\bf t}=(\tau_1,\tau_2,\dots, \tau_l)$, for {most} times, with {overwhelmingly} high 
probability, the optimal protocol extracts $- (\Delta F_{\text{rev}} - \Delta F_{\text{irrev}} )$. If one applies the reversed protocol, with {suitable} times for equilibration, for 
{most} times and all initial states, with {overwhelming} probability, the work extracted in the inverse protocol would be $-\Delta F_{\rm rev}$. Therefore, the optimal protocol is typically irreversible.\bigskip


\begin{thebibliography}{10}%
\makeatletter
\providecommand \@ifxundefined [1]{%
 \ifx #1\undefined \expandafter \@firstoftwo
 \else \expandafter \@secondoftwo
\fi
}%
\providecommand \@ifnum [1]{%
 \ifnum #1\expandafter \@firstoftwo
 \else \expandafter \@secondoftwo
\fi
}%
\providecommand \enquote [1]{``#1''}%
\providecommand \bibnamefont  [1]{#1}%
\providecommand \bibfnamefont [1]{#1}%
\providecommand \citenamefont [1]{#1}%
\providecommand\href[0]{\@sanitize\@href}%
\providecommand\@href[1]{\endgroup\@@startlink{#1}\endgroup\@@href}%
\providecommand\@@href[1]{#1\@@endlink}%
\providecommand \@sanitize [0]{\begingroup\catcode`\&12\catcode`\#12\relax}%
\@ifxundefined \pdfoutput {\@firstoftwo}{%
 \@ifnum{\z@=\pdfoutput}{\@firstoftwo}{\@secondoftwo}%
}{%
 \providecommand\@@startlink[1]{\leavevmode}%
 \providecommand\@@endlink[0]{}%
}{%
 \providecommand\@@startlink[1]{%
  \leavevmode
  \pdfstartlink
   attr{/Border[0 0 1 ]/H/I/C[0 1 1]}%
   user{/Subtype/Link/A<</Type/Action/S/URI/URI(#1)>>}%
  \relax
 }%
 \providecommand\@@endlink[0]{\pdfendlink}%
}%
\providecommand \url  [0]{\begingroup\@sanitize \@url }%
\providecommand \@url [1]{\endgroup\@href {#1}{\urlprefix}}%
\providecommand \urlprefix [0]{URL }%
\providecommand \Eprint[0]{\href }%
\@ifxundefined \urlstyle {%
  \providecommand \doi [1]{doi:\discretionary{}{}{}#1}%
}{%
  \providecommand \doi [0]{doi:\discretionary{}{}{}\begingroup
  \urlstyle{rm}\Url }%
}%
\providecommand \doibase [0]{http://dx.doi.org/}%
\providecommand \Doi[1]{\href{\doibase#1}}%
\providecommand \bibAnnote [3]{%
  \BibitemShut{#1}%
  \begin{quotation}\noindent
    \textsc{Key:}\ #2\\\textsc{Annotation:}\ #3%
  \end{quotation}%
}%
\providecommand \bibAnnoteFile [2]{%
  \IfFileExists{#2}{\bibAnnote {#1} {#2} {\input{#2}}}{}%
}%
\providecommand \typeout [0]{\immediate \write \m@ne }%
\providecommand \selectlanguage [0]{\@gobble}%
\providecommand \bibinfo [0]{\@secondoftwo}%
\providecommand \bibfield [0]{\@secondoftwo}%
\providecommand \translation [1]{[#1]}%
\providecommand \BibitemOpen[0]{}%
\providecommand \bibitemStop [0]{}%
\providecommand \bibitemNoStop [0]{.\EOS\space}%
\providecommand \EOS [0]{\spacefactor3000\relax}%
\providecommand \BibitemShut [1]{\csname bibitem#1\endcsname}%
\bibitem{Giles}%
  \BibitemOpen
  \bibfield{author}{%
  \bibinfo {author} {\bibfnamefont{R.}~\bibnamefont{Giles}},\ }%
  \emph{\bibinfo {title} {Mathematical foundations of thermodynamics}}\
  (\bibinfo {publisher} {Pergamon, Oxford},\ \bibinfo {year} {1964})%
  \bibAnnoteFile{NoStop}{Giles}%
\bibitem{Alicki}%
  \BibitemOpen
  \bibfield{author}{%
  \bibinfo {author} {\bibfnamefont{D.}~\bibnamefont{Gelbwaser-Klimovsky}},
  \bibinfo {author} {\bibfnamefont{R.}~\bibnamefont{Alicki}},\ and\ \bibinfo
  {author} {\bibfnamefont{G.}~\bibnamefont{Kurizki}},\ }%
  \bibfield{journal}{%
  \bibinfo {journal} {Euro. Phys. Lett.}\ }%
  \textbf{\bibinfo {volume} {103}},\ \bibinfo {pages} {60005} (\bibinfo {year}
  {2013})%
  \bibAnnoteFile{NoStop}{Alicki}%
\bibitem{Kosloff2}%
  \BibitemOpen
  \bibfield{author}{%
  \bibinfo {author} {\bibfnamefont{R.}~\bibnamefont{Kosloff}},\ }%
  \bibfield{journal}{%
  \bibinfo {journal} {Entropy}\ }%
  \textbf{\bibinfo {volume} {15}},\ \bibinfo {pages} {2100} (\bibinfo {year}
  {2013})%
  \bibAnnoteFile{NoStop}{Kosloff2}%
\bibitem{Bennett}%
  \BibitemOpen
  \bibfield{author}{%
  \bibinfo {author} {\bibfnamefont{C.~H.}\ \bibnamefont{Bennett}},\ }%
  \bibfield{journal}{%
  \bibinfo {journal} {Studies Hist. Phil. Mod. Phys.}\ }%
  \textbf{\bibinfo {volume} {34}},\ \bibinfo {pages} {501} (\bibinfo {year}
  {2003})%
  \bibAnnoteFile{NoStop}{Bennett}%
\bibitem{Henrich}%
  \BibitemOpen
  \bibfield{author}{%
  \bibinfo {author} {\bibfnamefont{M.~J.}\ \bibnamefont{Henrich}}, \bibinfo
  {author} {\bibfnamefont{F.}~\bibnamefont{Rempp}},\ and\ \bibinfo {author}
  {\bibfnamefont{G.}~\bibnamefont{Mahler}},\ }%
  \bibfield{journal}{%
  \bibinfo {journal} {Eur. Phys. J.}\ }%
  \textbf{\bibinfo {volume} {151}},\ \bibinfo {pages} {157} (\bibinfo {year}
  {2005})%
  \bibAnnoteFile{NoStop}{Henrich}%
\bibitem{Allahverdyan2000}%
  \BibitemOpen
  \bibfield{author}{%
  \bibinfo {author} {\bibfnamefont{A.~E.}\ \bibnamefont{Allahverdyan}}\ and\
  \bibinfo {author} {\bibfnamefont{T.~M.}\ \bibnamefont{Nieuwenhuizen}},\ }%
  \bibfield{journal}{%
  \bibinfo {journal} {Phys. Rev. Lett.}\ }%
  \textbf{\bibinfo {volume} {85}},\ \bibinfo {pages} {1799} (\bibinfo {year}
  {2001})%
  \bibAnnoteFile{NoStop}{Allahverdyan2000}%
\bibitem{Allahverdyan2001}%
  \BibitemOpen
  \bibfield{author}{%
  \bibinfo {author} {\bibfnamefont{A.~E.}\ \bibnamefont{Allahverdyan}}\ and\
  \bibinfo {author} {\bibfnamefont{T.~M.}\ \bibnamefont{Nieuwenhuizen}},\ }%
  \bibfield{journal}{%
  \bibinfo {journal} {Phys. Rev. E}\ }%
  \textbf{\bibinfo {volume} {64}},\ \bibinfo {pages} {056117} (\bibinfo {year}
  {2001})%
  \bibAnnoteFile{NoStop}{Allahverdyan2001}%
\bibitem{Hilt2011}%
  \BibitemOpen
  \bibfield{author}{%
  \bibinfo {author} {\bibfnamefont{S.}~\bibnamefont{Hilt}}, \bibinfo {author}
  {\bibfnamefont{S.}~\bibnamefont{Shabbir}}, \bibinfo {author}
  {\bibfnamefont{J.}~\bibnamefont{Anders}},\ and\ \bibinfo {author}
  {\bibfnamefont{E.}~\bibnamefont{Lutz}},\ }%
  \bibfield{journal}{%
  \bibinfo {journal} {Phys. Rev. E}\ }%
  \textbf{\bibinfo {volume} {83}},\ \bibinfo {pages} {030102} (\bibinfo {year}
  {2011})%
  \bibAnnoteFile{NoStop}{Hilt2011}%
\bibitem{Campisi2010}%
  \BibitemOpen
  \bibfield{author}{%
  \bibinfo {author} {\bibfnamefont{M.}~\bibnamefont{Campisi}}, \bibinfo
  {author} {\bibfnamefont{D.}~\bibnamefont{Zueco}},\ and\ \bibinfo {author}
  {\bibfnamefont{P.}~\bibnamefont{Talkner}},\ }%
  \bibfield{journal}{%
  \bibinfo {journal} {Chem. Phys.}\ }%
  \textbf{\bibinfo {volume} {375}},\ \bibinfo {pages} {187} (\bibinfo {year}
  {2010})%
  \bibAnnoteFile{NoStop}{Campisi2010}%
\bibitem{Davies}%
  \BibitemOpen
  \bibfield{author}{%
  \bibinfo {author} {\bibfnamefont{E.~B.}\ \bibnamefont{Davis}},\ }%
  \bibfield{journal}{%
  \bibinfo {journal} {Commun. Math. Phys.}\ }%
  \textbf{\bibinfo {volume} {39}},\ \bibinfo {pages} {91} (\bibinfo {year}
  {1974})%
  \bibAnnoteFile{NoStop}{Davies}%
\bibitem{Kosloff}%
  \BibitemOpen
  \bibfield{author}{%
  \bibinfo {author} {\bibfnamefont{E.}~\bibnamefont{Geva}}\ and\ \bibinfo
  {author} {\bibfnamefont{R.~A.}\ \bibnamefont{Kosloff}},\ }%
  \bibfield{journal}{%
  \bibinfo {journal} {J. Chem. Phys.}\ }%
  \textbf{\bibinfo {volume} {96}},\ \bibinfo {pages} {3054} (\bibinfo {year}
  {1992})%
  \bibAnnoteFile{NoStop}{Kosloff}%
\bibitem{InteractingThermalisation}%
  \BibitemOpen
  \bibfield{author}{%
  \bibinfo {author} {\bibfnamefont{A.}~\bibnamefont{Riera}}, \bibinfo {author}
  {\bibfnamefont{C.}~\bibnamefont{Gogolin}},\ and\ \bibinfo {author}
  {\bibfnamefont{J.}~\bibnamefont{Eisert}},\ }%
  \bibfield{journal}{%
  \bibinfo {journal} {Phys. Rev. Lett.}\ }%
  \textbf{\bibinfo {volume} {108}},\ \bibinfo {pages} {080402} (\bibinfo {year}
  {2012})%
  \bibAnnoteFile{NoStop}{InteractingThermalisation}%
\bibitem{ShortFarrelly12}%
  \BibitemOpen
  \bibfield{author}{%
  \bibinfo {author} {\bibfnamefont{A.~J.}\ \bibnamefont{Short}}\ and\ \bibinfo
  {author} {\bibfnamefont{T.~C.}\ \bibnamefont{Farrelly}},\ }%
  \bibfield{journal}{%
  \bibinfo {journal} {New J. Phys.}\ }%
  \textbf{\bibinfo {volume} {14}},\ \bibinfo {pages} {013063} (\bibinfo {year}
  {2012})%
  \bibAnnoteFile{NoStop}{ShortFarrelly12}%
\bibitem{Cramer_etal08}%
  \BibitemOpen
  \bibfield{author}{%
  \bibinfo {author} {\bibfnamefont{M.}~\bibnamefont{Cramer}}, \bibinfo {author}
  {\bibfnamefont{C.~M.}\ \bibnamefont{Dawson}}, \bibinfo {author}
  {\bibfnamefont{J.}~\bibnamefont{Eisert}},\ and\ \bibinfo {author}
  {\bibfnamefont{T.~J.}\ \bibnamefont{Osborne}},\ }%
  \bibfield{journal}{%
  \bibinfo {journal} {Phys. Rev. Lett.}\ }%
  \textbf{\bibinfo {volume} {100}},\ \bibinfo {pages} {030602} (\bibinfo {year}
  {2008})%
  \bibAnnoteFile{NoStop}{Cramer_etal08}%
\bibitem{Linden2012}%
  \BibitemOpen
  \bibfield{author}{%
  \bibinfo {author} {\bibfnamefont{N.}~\bibnamefont{Linden}}, \bibinfo {author}
  {\bibfnamefont{S.}~\bibnamefont{Popescu}}, \bibinfo {author}
  {\bibfnamefont{A.~J.}\ \bibnamefont{Short}},\ and\ \bibinfo {author}
  {\bibfnamefont{A.}~\bibnamefont{Winter}},\ }%
  \bibfield{journal}{%
  \bibinfo {journal} {Phys. Rev. E}\ }%
  \textbf{\bibinfo {volume} {79}},\ \bibinfo {pages} {061103} (\bibinfo {year}
  {2009})%
  \bibAnnoteFile{NoStop}{Linden2012}%
\bibitem{Review}%
  \BibitemOpen
  \bibfield{author}{%
  \bibinfo {author} {\bibfnamefont{A.}~\bibnamefont{Polkovnikov}}, \bibinfo
  {author} {\bibfnamefont{K.}~\bibnamefont{Sengupta}}, \bibinfo {author}
  {\bibfnamefont{A.}~\bibnamefont{Silva}},\ and\ \bibinfo {author}
  {\bibfnamefont{M.}~\bibnamefont{Vengalattore}},\ }%
  \bibfield{journal}{%
  \bibinfo {journal} {Rev. Mod. Phys.}\ }%
  \textbf{\bibinfo {volume} {83}},\ \bibinfo {pages} {863} (\bibinfo {year}
  {2011})%
  \bibAnnoteFile{NoStop}{Review}%
\bibitem{CalabreseCardy}%
  \BibitemOpen
  \bibfield{author}{%
  \bibinfo {author} {\bibfnamefont{P.}~\bibnamefont{Calabrese}}\ and\ \bibinfo
  {author} {\bibfnamefont{J.}~\bibnamefont{Cardy}},\ }%
  \bibfield{journal}{%
  \bibinfo {journal} {Phys. Rev. Lett.}\ }%
  \textbf{\bibinfo {volume} {96}},\ \bibinfo {pages} {136801} (\bibinfo {year}
  {2006})%
  \bibAnnoteFile{NoStop}{CalabreseCardy}%
\bibitem{Bartsch:2009}%
  \BibitemOpen
  \bibfield{author}{%
  \bibinfo {author} {\bibfnamefont{C.}~\bibnamefont{Bartsch}}\ and\ \bibinfo
  {author} {\bibfnamefont{J.}~\bibnamefont{Gemmer}},\ }%
  \bibfield{journal}{%
  \bibinfo {journal} {Phys. Rev. Lett.}\ }%
  \textbf{\bibinfo {volume} {102}},\ \bibinfo {pages} {110403} (\bibinfo {year}
  {2009})%
  \bibAnnoteFile{NoStop}{Bartsch:2009}%
\bibitem{ETH3}%
  \BibitemOpen
  \bibfield{author}{%
  \bibinfo {author} {\bibfnamefont{M.}~\bibnamefont{Rigol}}, \bibinfo {author}
  {\bibfnamefont{V.}~\bibnamefont{Dunjko}},\ and\ \bibinfo {author}
  {\bibfnamefont{M.}~\bibnamefont{Olshanii}},\ }%
  \bibfield{journal}{%
  \bibinfo {journal} {Nature}\ }%
  \textbf{\bibinfo {volume} {452}},\ \bibinfo {pages} {854} (\bibinfo {year}
  {2008})%
  \bibAnnoteFile{NoStop}{ETH3}%
\bibitem{ETH1}%
  \BibitemOpen
  \bibfield{author}{%
  \bibinfo {author} {\bibfnamefont{J.~M.}\ \bibnamefont{Deutsch}},\ }%
  \bibfield{journal}{%
  \bibinfo {journal} {Phys. Rev. A}\ }%
  \textbf{\bibinfo {volume} {43}},\ \bibinfo {pages} {2046} (\bibinfo {year}
  {1991})%
  \bibAnnoteFile{NoStop}{ETH1}%
\bibitem{ETH2}%
  \BibitemOpen
  \bibfield{author}{%
  \bibinfo {author} {\bibfnamefont{M.}~\bibnamefont{Srednicki}},\ }%
  \bibfield{journal}{%
  \bibinfo {journal} {Phys. Rev. E}\ }%
  \textbf{\bibinfo {volume} {50}},\ \bibinfo {pages} {888} (\bibinfo {year}
  {1994})%
  \bibAnnoteFile{NoStop}{ETH2}%
\bibitem{Jarzynski}%
  \BibitemOpen
  \bibfield{author}{%
  \bibinfo {author} {\bibfnamefont{C.}~\bibnamefont{Jarzynski}},\ }%
  \bibfield{journal}{%
  \bibinfo {journal} {J. Stat. Phys.}\ }%
  \textbf{\bibinfo {volume} {96}},\ \bibinfo {pages} {415} (\bibinfo {year}
  {1999})%
  \bibAnnoteFile{NoStop}{Jarzynski}%
\bibitem{Mueller}%
  \BibitemOpen
  \bibfield{author}{%
  \bibinfo {author} {\bibfnamefont{M.}~\bibnamefont{M{\"u}ller}}, \bibinfo
  {author} {\bibfnamefont{E.}~\bibnamefont{Adlam}}, \bibinfo {author}
  {\bibfnamefont{L.}~\bibnamefont{Masanes}},\ and\ \bibinfo {author}
  {\bibfnamefont{N.}~\bibnamefont{Wiebe}}}%
   (\bibinfo {year} {2013}),\ \bibinfo {note} {arxiv:1312.7402}%
  \bibAnnoteFile{NoStop}{Mueller}%
\bibitem{Abergtrullywork}%
  \BibitemOpen
  \bibfield{author}{%
  \bibinfo {author} {\bibfnamefont{J.}~\bibnamefont{Aberg}},\ }%
  \bibfield{journal}{%
  \bibinfo {journal} {Nature Comm.}\ }%
  \textbf{\bibinfo {volume} {4}},\ \bibinfo {pages} {1925} (\bibinfo {year}
  {2013})%
  \bibAnnoteFile{NoStop}{Abergtrullywork}%
\bibitem{Negative}%
  \BibitemOpen
  \bibfield{author}{%
  \bibinfo {author} {\bibfnamefont{L.}~\bibnamefont{del Rio}}, \bibinfo
  {author} {\bibfnamefont{J.}~\bibnamefont{Aberg}}, \bibinfo {author}
  {\bibfnamefont{R.}~\bibnamefont{Renner}}, \bibinfo {author}
  {\bibfnamefont{O.}~\bibnamefont{Dahlsten}},\ and\ \bibinfo {author}
  {\bibfnamefont{V.}~\bibnamefont{Vedral}},\ }%
  \bibfield{journal}{%
  \bibinfo {journal} {Nature}\ }%
  \textbf{\bibinfo {volume} {474}},\ \bibinfo {pages} {61} (\bibinfo {year}
  {2011})%
  \bibAnnoteFile{NoStop}{Negative}%
\bibitem{EgloffRenner}%
  \BibitemOpen
  \bibfield{author}{%
  \bibinfo {author} {\bibfnamefont{D.}~\bibnamefont{Egloff}}, \bibinfo {author}
  {\bibfnamefont{O.~C.~O.}\ \bibnamefont{Dahlsten}}, \bibinfo {author}
  {\bibfnamefont{R.}~\bibnamefont{Renner}},\ and\ \bibinfo {author}
  {\bibfnamefont{V.}~\bibnamefont{Vedral}},\ }%
  \enquote{\bibinfo {title} {{Laws of thermodynamics beyond the von Neumann
  regime}},}\ \Eprint{http://arxiv.org/abs/1207.0434}{arXiv:1207.0434}%
  \bibAnnoteFile{NoStop}{EgloffRenner}%
\bibitem{Anders2013}%
  \BibitemOpen
  \bibfield{author}{%
  \bibinfo {author} {\bibfnamefont{J.}~\bibnamefont{Anders}}\ and\ \bibinfo
  {author} {\bibfnamefont{V.}~\bibnamefont{Giovannetti}},\ }%
  \bibfield{journal}{%
  \bibinfo {journal} {New J. Phys.}\ }%
  \textbf{\bibinfo {volume} {15}},\ \bibinfo {pages} {033022} (\bibinfo {year}
  {2013})%
  \bibAnnoteFile{NoStop}{Anders2013}%
\bibitem{ResourceTheory}%
  \BibitemOpen
  \bibfield{author}{%
  \bibinfo {author} {\bibfnamefont{F.~G. S.~L.}\ \bibnamefont{Brandao}},
  \bibinfo {author} {\bibfnamefont{M.}~\bibnamefont{Horodecki}}, \bibinfo
  {author} {\bibfnamefont{J.}~\bibnamefont{Oppenheim}}, \bibinfo {author}
  {\bibfnamefont{J.~M.}\ \bibnamefont{Renes}},\ and\ \bibinfo {author}
  {\bibfnamefont{R.~W.}\ \bibnamefont{Spekkens}},\ }%
  \bibfield{journal}{%
  \bibinfo {journal} {Phys. Rev. Lett.}\ }%
  \textbf{\bibinfo {volume} {111}},\ \bibinfo {pages} {250404} (\bibinfo {year}
  {2013})%
  \bibAnnoteFile{NoStop}{ResourceTheory}%
\bibitem{Nanomachines}%
  \BibitemOpen
  \bibfield{author}{%
  \bibinfo {author} {\bibfnamefont{M.}~\bibnamefont{Horodecki}}\ and\ \bibinfo
  {author} {\bibfnamefont{J.}~\bibnamefont{Oppenheim}},\ }%
  \bibfield{journal}{%
  \bibinfo {journal} {Nature Comm.}\ }%
  \textbf{\bibinfo {volume} {4}},\ \bibinfo {pages} {2059} (\bibinfo {year}
  {2013})%
  \bibAnnoteFile{NoStop}{Nanomachines}%
\bibitem{Popescu2013}%
  \BibitemOpen
  \bibfield{author}{%
  \bibinfo {author} {\bibfnamefont{P.}~\bibnamefont{Skrzypczyk}}, \bibinfo
  {author} {\bibfnamefont{A.~J.}\ \bibnamefont{Short}},\ and\ \bibinfo {author}
  {\bibfnamefont{S.}~\bibnamefont{Popescu}},\ }%
  \bibfield{journal}{%
  \bibinfo {journal} {Nature Comm.}\ }%
  \textbf{\bibinfo {volume} {5}},\ \bibinfo {pages} {4185} (\bibinfo {year}
  {2014})%
  \bibAnnoteFile{NoStop}{Popescu2013}%
\bibitem{Popescu2013b}%
  \BibitemOpen
  \bibfield{author}{%
  \bibinfo {author} {\bibfnamefont{P.}~\bibnamefont{Skrzypczyk}}, \bibinfo
  {author} {\bibfnamefont{A.~J.}\ \bibnamefont{Short}},\ and\ \bibinfo {author}
  {\bibfnamefont{S.}~\bibnamefont{Popescu}},\ }%
  \enquote{\bibinfo {title} {Extracting work from quantum systems},}\
  \Eprint{http://arxiv.org/abs/1302.2811}{arXiv:1302.2811}%
  \bibAnnoteFile{NoStop}{Popescu2013b}%
\bibitem{Aberg}%
  \BibitemOpen
  \bibfield{author}{%
  \bibinfo {author} {\bibfnamefont{J.}~\bibnamefont{Aberg}},\ }%
  \bibfield{journal}{%
  \bibinfo {journal} {Phys. Rev. Lett.}\ }%
  \textbf{\bibinfo {volume} {113}},\ \bibinfo {pages} {150402} (\bibinfo {year}
  {2014})%
  \bibAnnoteFile{NoStop}{Aberg}%
\bibitem{Brunnerentanglement}%
  \BibitemOpen
  \bibfield{author}{%
  \bibinfo {author} {\bibfnamefont{N.}~\bibnamefont{Brunner}}, \bibinfo
  {author} {\bibfnamefont{M.}~\bibnamefont{Huber}}, \bibinfo {author}
  {\bibfnamefont{N.}~\bibnamefont{Linden}}, \bibinfo {author}
  {\bibfnamefont{S.}~\bibnamefont{Popescu}}, \bibinfo {author}
  {\bibfnamefont{R.}~\bibnamefont{Silva}},\ and\ \bibinfo {author}
  {\bibfnamefont{P.}~\bibnamefont{Skrzypczyk}},\ }%
  \bibfield{journal}{%
  \bibinfo {journal} {Phys. Rev. E}\ }%
  \textbf{\bibinfo {volume} {89}},\ \bibinfo {pages} {032115} (\bibinfo {year}
  {2014})%
  \bibAnnoteFile{NoStop}{Brunnerentanglement}%
\bibitem{Acinentanglement}%
  \BibitemOpen
  \bibfield{author}{%
  \bibinfo {author} {\bibfnamefont{K.~V.}\ \bibnamefont{Hovhannisyan}},
  \bibinfo {author} {\bibfnamefont{M.}~\bibnamefont{Perarnau-Llobet}}, \bibinfo
  {author} {\bibfnamefont{M.}~\bibnamefont{Huber}},\ and\ \bibinfo {author}
  {\bibfnamefont{A.}~\bibnamefont{Ac\'in}},\ }%
  \bibfield{journal}{%
  \bibinfo {journal} {Phys. Rev. Lett.}\ }%
  \textbf{\bibinfo {volume} {111}},\ \bibinfo {pages} {240401} (\bibinfo {year}
  {2013})%
  \bibAnnoteFile{NoStop}{Acinentanglement}%
\bibitem{QuantitativeLandauer}%
  \BibitemOpen
  \bibfield{author}{%
  \bibinfo {author} {\bibfnamefont{P.}~\bibnamefont{Faist}}, \bibinfo {author}
  {\bibfnamefont{F.}~\bibnamefont{Dupuis}}, \bibinfo {author}
  {\bibfnamefont{J.}~\bibnamefont{Oppenheim}},\ and\ \bibinfo {author}
  {\bibfnamefont{R.}~\bibnamefont{Renner}},\ }%
  \enquote{\bibinfo {title} {{A quantitative Landauer's principle}},}\
  \Eprint{http://arxiv.org/abs/1211.1037}{arXiv:1211.1037}%
  \bibAnnoteFile{NoStop}{QuantitativeLandauer}%
\bibitem{MunichLandauer}%
  \BibitemOpen
  \bibfield{author}{%
  \bibinfo {author} {\bibfnamefont{D.}~\bibnamefont{Reeb}}\ and\ \bibinfo
  {author} {\bibfnamefont{M.~M.}\ \bibnamefont{Wolf}},\ }%
  \enquote{\bibinfo {title} {{(Im-)proving Landauer's principle}},}\
  \Eprint{http://arxiv.org/abs/1306.4352}{arXiv:1306.4352}%
  \bibAnnoteFile{NoStop}{MunichLandauer}%
\bibitem{Wilcoxformula}%
  \BibitemOpen
  \bibfield{author}{%
  \bibinfo {author} {\bibfnamefont{R.~M.}\ \bibnamefont{Wilcox}},\ }%
  \bibfield{journal}{%
  \bibinfo {journal} {J. Math. Phys.}\ }%
  \textbf{\bibinfo {volume} {9}},\ \bibinfo {pages} {962} (\bibinfo {year}
  {1966})%
  \bibAnnoteFile{NoStop}{Wilcoxformula}%
\bibitem{Kliesch2013}%
  \BibitemOpen
  \bibfield{author}{%
  \bibinfo {author} {\bibfnamefont{M.}~\bibnamefont{Kliesch}}, \bibinfo
  {author} {\bibfnamefont{C.}~\bibnamefont{Gogolin}}, \bibinfo {author}
  {\bibfnamefont{M.~J.}\ \bibnamefont{Kastoryano}}, \bibinfo {author}
  {\bibfnamefont{A.}~\bibnamefont{Riera}},\ and\ \bibinfo {author}
  {\bibfnamefont{J.}~\bibnamefont{Eisert}},\ }%
  \bibfield{journal}{%
  \bibinfo {journal} {Phys. Rev. X}\ }%
  \textbf{\bibinfo {volume} {4}},\ \bibinfo {pages} {031019} (\bibinfo {year}
  {2014})%
  \bibAnnoteFile{NoStop}{Kliesch2013}%
\bibitem{AcinIntensivetemp}%
  \BibitemOpen
  \bibfield{author}{%
  \bibinfo {author} {\bibfnamefont{A.}~\bibnamefont{Ferraro}}, \bibinfo
  {author} {\bibfnamefont{A.}~\bibnamefont{Garc\'ia-Saez}},\ and\ \bibinfo
  {author} {\bibfnamefont{A.}~\bibnamefont{Ac\'in}},\ }%
  \bibfield{journal}{%
  \bibinfo {journal} {Europhys. Lett.}\ }%
  \textbf{\bibinfo {volume} {98}},\ \bibinfo {pages} {10009} (\bibinfo {year}
  {2012})%
  \bibAnnoteFile{NoStop}{AcinIntensivetemp}%
\bibitem{Acinlocaltemp}%
  \BibitemOpen
  \bibfield{author}{%
  \bibinfo {author} {\bibfnamefont{A.}~\bibnamefont{Garc\'ia-Saez}}, \bibinfo
  {author} {\bibfnamefont{A.}~\bibnamefont{Ferraro}},\ and\ \bibinfo {author}
  {\bibfnamefont{A.}~\bibnamefont{Ac\'in}},\ }%
  \bibfield{journal}{%
  \bibinfo {journal} {Phys. Rev. A}\ }%
  \textbf{\bibinfo {volume} {79}},\ \bibinfo {pages} {052340} (\bibinfo {year}
  {2009})%
  \bibAnnoteFile{NoStop}{Acinlocaltemp}%
\end{thebibliography}
\end{document}